\newenvironment{proof}{\par\noindent{\bf Proof:}}{\mbox{}\hfill$\qed$\\}
\newcommand{\ignore}[1]{ }
\newcounter{rem}
\def\etal{\textsl{et~al.}}
\def\qed{\hbox{\rlap{$\sqcap$}$\sqcup$}}
\begin{document}

\title{Constant Workspace Algorithms for Computing Relative Hulls in the Plane}
\titlerunning{Computing relative hulls using $O(1)$ workspace}

\author{
Himanshu Chhabra\inst{1}
\and
R. Inkulu\inst{1}
}

\institute{
Department of Computer Science and Engineering\\
Indian Institute of Technology Guwahati\\
\email{\{c.himanshu,rinkulu\}@iitg.ac.in}
}

\maketitle

\pagenumbering{arabic}
\setcounter{page}{1}

\begin{abstract}
Constant workspace algorithms use a constant number of words in addition to the read-only input to the algorithm.
In this paper, we devise algorithms to efficiently compute relative hulls in the plane using a constant workspace.
Specifically, we devise algorithms for the following three problems: \newline
(i) Given two simple polygons $P$ and $Q$ with $P \subset Q$, compute a simple polygon $P'$ with a perimeter of minimum length such that $P \subseteq P' \subseteq Q$. \newline
(ii) Given two simple polygons $P$ and $Q$ such that $Q$ does not intersect the relative interior of $P$ but it does intersect the relative interior of the convex hull of $P$, compute a weakly simple polygon $P'$ with a perimeter of minimum length such that $P \subseteq P'$, the convex hull of $P$ contains $P'$, and $P'$ does not intersect the relative interior of $Q$. \newline
(iii) Given a set $S$ of points located in a simple polygon $P$, compute a weakly simple polygon $P'$ with a perimeter of minimum length such that $P' \subseteq P$ and $P'$ contains all the points in $S$. \newline
To our knowledge, no prior work devised algorithms to compute relative hulls using a constant workspace, and this work is the first such attempt.
\end{abstract}

\section{Introduction}
\label{sect:intro}

A {\it simple polygon} in $\mathbb{R}^2$ is the closed region bounded by a piecewise linear simple closed curve.
A polygon $P$ with at least three sides is called a {\it weakly simple polygon} whenever each of the vertices of $P$ can be perturbed by at most $\epsilon$ to obtain a simple polygon for every real number $\epsilon > 0$.
A simple polygon $P$ in $\mathbb{R}^2$ is {\it convex} whenever the line segment $pq$ is contained in $P$ for every two points $p, q \in P$. 
Computing the convex hull in the plane is a fundamental problem in computational geometry. 
Specifically, the following two problems are of interest: 
(i) Given a set $S$ of $n$ points in the plane, find a convex (simple) polygon $CH(S)$ with perimeter of minimum length so that $CH(S)$ contains all the points in $S$.
(ii) Given a simple polygon $P$ in the plane defined by $n$ vertices, find a convex (simple) polygon $CH(P)$ with perimeter of minimum length that contains $P$.  
It is well known that both $CH(S)$ and $CH(P)$ are unique.
The famous algorithms for the first problem include Jarvis march (a specialization of Chand and Kapur's gift wrapping), Graham's scan, Quickhull algorithm, Shamos's merge algorithm, Preparata-Hong's merge algorithm, incremental and randomized incremental algorithms.
Many of these algorithms are detailed in popular textbooks on computational geometry, see Preparata and Shamos~\cite{books/compgeom/prep1985} and de Berg~\etal~\cite{books/compgeom/deberg2008}.
For $h$ being the number of points in $S$ that are on the boundary of $CH(S)$, the worst-case lower bound on the time complexity of the first problem is known to be $\Omega(n\lg{h})$.
For this problem, several optimal algorithms that take $\Theta(n\lg{h})$ worst-case time are known: for example, algorithms by Kirkpatrick and Seidel~\cite{journals/siamjc/1986KirkSeid} and Chan~\cite{journals/dcg/1996Chan}.
For the second problem, several algorithms with $O(n)$ time complexity are known; the one by Lee is presented in \cite{books/compgeom/prep1985}.

The relative hull, also known as the geodesic hull, has received increasing attention in computational geometry, and it appears in a variety of applications in robotics, industrial manufacturing, and geographic information systems.
The relative hulls and their related structures based on geodesic metrics have been used to approximate curves and surfaces in digital geometry.
The following two specific problems on relative hulls are famous:
(i) Given two simple polygons $P$ and $Q$ with $P \subset Q$, compute a simple polygon, known as the {\it relative hull of $P$ with respect to $Q$}, denoted by $RH(P|Q)$, with a perimeter of minimum length such that $P \subseteq RH(P|Q) \subseteq Q$.
(ii) Given a set $S$ of points located in a simple polygon $Q$, compute a weakly simple polygon, known as the {\it relative hull of $S$ with respect to $Q$}, denoted by $RH(S|Q)$, with a perimeter of minimum length such that $RH(S|Q)$ contains all the points in $S$ and is contained in $Q$.
Toussaint~\cite{journals/sigproc/1986toussaint} devised algorithms for both of these problems.

In this paper, we consider three problems on relative hulls and devise algorithms to compute them using a constant workspace.
The workspace of an algorithm is the additional space, excluding both the input and output space complexities, required to execute that algorithm.
Traditionally, workspace complexity has played only second fiddle to time complexity. 
However, due to the limited space on chips, algorithms need to use space efficiently, for example, in embedded systems.
For such reasons, algorithms using a small footprint are desired.
A detailed survey of various models and famous algorithms in these models can be found in Banyassady~\etal~\cite{journals/sigact/2018BanyKM}.
Next, we mention three computational models that are popular in designing algorithms focusing on workspace efficiency: algorithms that work using a constant workspace, in-place algorithms, and algorithms that have a trade off between time and workspace.

In constant workspace algorithms, input is read-only, output is not stored but streamed, and these algorithms use $O(1)$ workspace.
The following are the well-known constant workspace algorithms:
computing shortest paths in trees and simple polygons by Asano~\etal~\cite{journals/jgaa/2011AsanoMW},
triangulating a planar point set and trapezoidal decomposition of a simple polygon by Asano~\etal~\cite{journals/jocg/2011AsanoMRW}, and finding common tangents of two polygons that separate those polygons by Abrahamsen~\cite{conf/socg/2015Abrahamsen}.
Reingold~\cite{journals/jacm/2008reingold} devised a constant workspace algorithm to determine whether a path exists between any two given nodes in an undirected graph, settling an important open problem.
As mentioned, in this paper, we devise algorithms for computing relative hulls in the plane, and each of these algorithms uses constant workspace.

In-place algorithms also use $O(1)$ workspace; however, the input array is not read-only.
That is, at any instant during the algorithm's execution, the input array can contain any permutation of elements of the initial input array.
Strictly speaking, these algorithms use $O(n)$ workspace, where $n$ is the input size.
Some of the notable in-place algorithms from the literature include,
computing convex hulls in the plane by Bronnimann~\etal~\cite{conf/lati/2002BronniIKMMT}, 
finding points of intersections of line segments in the plane by Chen and Chan~\cite{conf/cccg/2003ChenChan},
computing convex hulls in $\mathbb{R}^3$ and Delaunay triangulations in the plane by Bronnimann~\etal~\cite{conf/scg/2004BronniCC},
sorting algorithm by Franceschini and Geffert~\cite{journals/jacm/2005FranGeff},
nearest neighbour search in the plane by Chan and Chen~\cite{conf/soda/2008ChanChen},
computing 3-d convex hulls and computing line segment intersection in the plane by Chan and Chen~\cite{journals/cgta/2010ChanChen},
and
computing layers of maxima in the plane and the maxima in $\mathbb{R}^3$ by Blunck and Vahrenhold~\cite{journals/algorithmica/2010BlunckVahr}.

For algorithms that provide a trade-off between time and workspace, $s$ is given as an additional input parameter, and the algorithms are permitted to use $O(s)$ workspace in the worst case.
The time complexity of such algorithms is expressed as a function of $n$ and $s$, making it viable to provide an algorithmic scheme to achieve a trade-off between workspace and time complexities, a lower asymptotic worst-case time complexity as $s$ grows and a larger asymptotic worst-case time complexity as $s$ gets smaller.
The notable algorithms in this model include, 
computing all-nearest-larger-neighbors by Asano~\etal~\cite{journals/cgta/2013AsanoBBKMRS},
computing the planar convex hull by Darwish and Elmasry~\cite{conf/esa/2014DarwElma},
computing shortest paths in a simple polygon by Asano~\etal~\cite{journals/cgta/2013AsanoBBKMRS} and by Peled~\cite{conf/socg/2015Peled},
triangulating a simple polygon by Aronov~\etal~\cite{journals/jocg/2017AronovKPRR},
triangulating points located in the plane by Korman~\etal~\cite{journals/cgta/2015KormanMRRSS},
computing the Voronoi diagram of points in the plane by Banyassady~\etal~\cite{journals/jocg/2018BanyKMRRSS},
computing the $k$-visibility region of a point in a simple polygon by Bahoo~\etal~\cite{journals/tcs/2019BahooBBDM}, and
computing the Euclidean minimum spanning tree of points in the plane by Banyassady~\etal~\cite{journals/jocg/2020BanyBM}.
Barba~\etal~\cite{journals/algorithmica/2015BarbaKLSS} devised a framework that trades off between time and workspace for applications that use the stack data structure.
Note that when $s$ is $1$, these algorithms are essentially constant workspace or in-place algorithms, depending on whether the input array is read-only or the input elements can be permuted.
A variation of this model considers $O(s\lg{n})$ number of bits as the workspace.

\subsection*{Preliminaries}
\label{subsect:prelim}

The convex hull of a set $S$ of points in the plane is denoted by $CH(S)$.
The convex hull of a simple polygon $P$, denoted by $CH(P)$, is the convex hull of the vertices of $P$.
The number of vertices of any (weakly) simple polygon $P$ is denoted by $|P|$.
For any (weakly) simple polygon $P$, the boundary (cycle) of $P$ is denoted by $bd(P)$.
The relative interior $rint(P)$ of a (weakly) simple polygon $P$ is $P$ excluding $bd(P)$.
We note that the relative interior of a weakly simple polygon consists of a collection of simply connected open regions.
For any edge $e$ of $CH(P)$ with $e \notin P$, the {\it pocket of $e$} is the simple polygon in $CH(P) \backslash P$ with $e$ on its boundary. 
And, $e$ is said to be the {\it lid of the pocket} of $e$. 
We call two points $p, q$ located in a (weakly) simple polygon $P$ {\it visible} to each other whenever the relative interior of the line segment joining $p$ and $q$ does not intersect any edge on $bd(P)$.
The vertices of an edge of a simple polygon are considered visible to each other.
Let $r'$ and $r''$ be two non-parallel rays with origin at a point $p$.
Let $\overrightarrow{v_1}$ and $\overrightarrow{v_2}$ be the unit vectors along rays $r'$ and $r''$, respectively. 
A {\it cone} $C_p(r', r'')$ is the set of points defined by rays $r'$ and $r''$ such that a point $q \in C_p(r', r'')$ if and only if $q$ can be expressed as a convex combination of vectors $\overrightarrow{v_1}$ and $\overrightarrow{v_2}$ with positive coefficients.
When the rays are evident from the context, we denote $C_p(r', r'')$ by $C_p$ or $C$.
Let $P$ be a simple polygon and let $T(P)$ be a triangulation of $P$.
Also, let $S$ be an ordered set of triangles in $T(P)$, with $|S| \ge 2$, such that every two successive triangles in $S$ are abutting along an edge of $T(P)$.
Then, the {\it sleeve} corresponding to $S$ is the simply connected closed region resulting from the union of all the triangles in $S$.
To distinguish from the vertices of a (weakly) simple polygon, every point in $\mathbb{R}^2$ that is not necessarily a vertex of a polygon is called a point, and the vertices of graphs are called nodes.

\subsection*{Our contributions}
\label{subsect:contrib}

Traditional algorithms for finding relative hulls proceed by designing data structures that use a workspace whose size is linear in the input complexity.
Since there is no restriction on the memory, where required, these algorithms save the intermediate structures, such as hulls or sub-polygons, into data structures. 
Saving information in these data structures helps to avoid recomputations.
However, the workspace constraints in a constant workspace setting do not permit the liberal use of such data structures to store processed information. 
Hence, the problem of computing relative hulls becomes harder when using only a constant workspace is allowed, and this needs engineering a solution with special techniques devoted to constant workspace algorithms.
This paper proposes algorithms for finding relative hulls in the plane for three problems.
Each of these algorithms uses a constant workspace.

The first problem asks to compute the relative hull of a simple polygon $P$, wherein $P$ is contained in another simple polygon $Q$.
The algorithm by Toussaint~\cite{journals/sigproc/1986toussaint} for this problem takes $O(|P|+|Q|)$ time while using $O(|P|+|Q|)$ workspace.
For this problem, Wiederhold and Reyesetal~\cite{journals/compimanal/2016WiederR} proposed an algorithm that takes $O(|P|^2+|Q|^2)$ time using $O(|P|+|Q|)$ workspace.
This result mainly resolves the incorrect cases in the recursive implementations given by Klette~\cite{conf/icivc/2010Klette}.
Our algorithm for this problem takes $O(|P|^2+|Q|^2)$ time in the worst case using $O(1)$ workspace.
The approach of our algorithm is distinct from algorithms devised for this problem.
Noting Jarvis march to compute the convex hull of a set of points uses $O(1)$ workspace, our algorithm invokes Jarvis march on the vertices of $P$ to compute the edges of $CH(P)$.
We modify the constant workspace algorithm given in Asano~\etal~\cite{journals/jocg/2011AsanoMRW} to compute the geodesic shortest path between two points in any simple polygon, and use this modified version to provide a constant workspace algorithm $\cal A$ for this problem.
For every edge $e$ of $CH(P)$ computed by Jarvis march, if $e$ does not intersect $rint(Q)$, our algorithm outputs $e$.
Otherwise, $\cal A$ is applied to compute the geodesic shortest path $\tau'$ between the vertices defining $e$ in $P' \backslash Q$, and our algorithm outputs each line segment of $\tau'$ as and when that edge is computed, instead of outputting $e$.
$RH(P|Q)$ is the closed region defined by all the line segments output by our algorithm.

Given two simple polygons $P$ and $Q$ such that $rint(P) \cap Q = \emptyset$ but $rint(CH(P)) \cap Q \ne \emptyset$, $Q$ is said to be located/placed {\it alongside} $P$.
The second problem intends to find the relative hull of a simple polygon $P$ when another simple polygon $Q$ is located alongside $P$.
Toussaint~\cite{journals/dcg/1989Toussaint} gave an algorithm for this problem that takes $O(|P|+|Q|)$ time and $O(|P|+|Q|)$ workspace.
When $Q$ is located alongside $P$, there are two cases that arise: $Q$ belongs to the relative interior of a pocket of $P$, or it does not.
The algorithm in \cite{journals/dcg/1989Toussaint} identifies and handles these two cases separately, and in both cases, the problem is reduced to finding the geodesic shortest path in a simple polygon.
Based on their algorithm, we propose an algorithm that takes $O(|P|^2+|Q|^2)$ time using $O(1)$ workspace for computing $RH(P | Q)$.
When $Q$ is not located in the relative interior of any pocket of $P$, following the algorithm for this problem in \cite{journals/dcg/1989Toussaint}, we reduce this problem to finding a geodesic shortest path in a simple polygon, and present a constant workspace algorithm to find the geodesic shortest path $\tau'$ in a simple polygon while that polygon is computed on-the-fly, as demanded by that shortest path finding algorithm.
This constant workspace algorithm primarily adapts the algorithm we present for the first problem.
When $Q$ is contained in a pocket $P'$ of $P$, we design a constant workspace algorithm to compute a geodesic shortest path $\tau'$ between the endpoints of the lid $e$ of $P'$ in $P' \backslash Q$.
This involves viewing $P' \backslash Q$ as a simple polygon $P''$, computing $P''$ on-the-fly from $P'$ and $Q$, and computing two geodesic shortest paths in $P''$ so that the union of these two paths is a geodesic shortest path $\tau'$ between the endpoints of $e$.
For computing these two geodesic shortest paths in $P''$, we use the constant workspace algorithm given in Asano~\etal~\cite{journals/jocg/2011AsanoMRW} with the modification presented in the previous algorithm.
In either of these two cases, the closed region bounded by $\tau'$ and the section of $bd(CH(P))$ between the endpoints of $\tau'$ such that this region contains $P$ is $RH(P|Q)$.
To achieve $O(1)$ workspace, like in the first algorithm, we compute the edges of the latter with Jarvis march, and output edges of $RH(P|Q)$ as they are computed.

The third problem aims to compute the relative hull of a set $S$ of points in a simple polygon $P$.
The algorithm by Toussaint~\cite{journals/sigproc/1986toussaint} for this problem takes $O((|P|+|S|)\lg{(|P|+|S|)})$ time but uses $O(|P|+|S|)$ workspace. 
Their algorithm computes a weakly simple polygon $P'$ by triangulating the simple polygon $P$, finding the convex hulls of points of $S$ lying in each triangle of that triangulation, and using the dual-tree of that triangulation to connect specific points on the boundaries of these hulls with geodesic shortest paths.
Then, a weakly simple polygon, located in $P \backslash P'$, having a perimeter of minimum length, is output as the $RH(S|P)$.
The constant workspace algorithm we present for this problem is more involved, though the outline of our algorithm is the same as the algorithm for this problem in \cite{journals/sigproc/1986toussaint}. 
As part of devising this algorithm, we engineered a solution by carefully combining constant workspace algorithms for several problems: Eulerian tour of trees, constrained Delaunay triangulation of a simple polygon, constructing the triangulation in an online fashion, computing the shortest geodesic path in a sleeve, and computing the sections of hulls of points lying in each triangle of a triangulation, to name a few.
Our algorithm takes $O(|P|^3+|S|^2)$ time in the worst case, and it uses $O(1)$ workspace.

In these three algorithms, at several junctures, we predominantly exploit the design paradigm that involves re-computing instead of storing the intermediate structures.
This paradigm is both natural and commonly used in space-constrained algorithms, wherein instead of storing information computed in a data structure, to save workspace, we recompute those elements as and when needed. 
To our knowledge, the algorithms presented in this paper are the first algorithms to compute relative hulls using constant workspace.

Section~\ref{sect:simpinsimp} presents an algorithm for computing the relative hull of a simple polygon located inside another simple polygon. 
Section~\ref{sect:simpalongsimp} gives an algorithm for computing the relative hull of a simple polygon $P$ when another simple polygon is placed alongside $P$.
An algorithm for computing the relative hull of a set of points located in a simple polygon is given in Section~\ref{sect:pointsinsimp}.
Conclusions are in Section~\ref{sect:conclu}.

\section{Relative hull of a simple polygon $P$ when $P$ contained in another simple polygon}
\label{sect:simpinsimp}

In this section, we devise a constant workspace algorithm to compute the relative hull $RH(P|Q)$ of a simple polygon $P$ contained in another simple polygon $Q$.
Here, $RH(P | Q)$ is a simple polygon such that its boundary cycle is contained in $Q \backslash rint(P)$ and it has a perimeter of minimum length.
It is well-known that $RH(P | Q)$ is unique for a given $P$ and $Q$.
As mentioned, to compute $RH(P | Q)$, the algorithm by Toussaint~\cite{journals/sigproc/1986toussaint} takes $O(|P|+|Q|)$ time using $O(|P|+|Q|)$ workspace, and the algorithm by Wiederhold and Reyesetal~\cite{journals/compimanal/2016WiederR} takes $O(|P|^2+|Q|^2)$ time using $O(|P|+|Q|)$ workspace.
The approach of our algorithm is different from these algorithms.
For $S'$ being the union of the set comprising the vertices of $P$ and the set comprising the vertices of $Q$, for convenience, we assume no two points in $S'$ have the same $x$-coordinates or $y$-coordinates, and no three points in $S'$ are collinear.

First, we note that every vertex of $CH(P)$ is a vertex of $RH(P|Q)$.
Hence, the leftmost vertex $p'$ of $P$ is a vertex of $RH(P|Q)$.
By traversing $bd(P)$ starting from $p'$, the edges of $bd(CH(P))$ are computed.
Since Jarvis march~\cite{journals/ipl/1973Jarvis} for computing the convex hull of a set of points in the plane uses $O(1)$ workspace, given any vertex $p_1$ on the hull, by applying an iteration of Jarvis march, our algorithm finds the vertex $p_2$ of $CH(P)$ that follows $p_1$ when $bd(CH(P))$ is traversed in clockwise direction.
Let $P'$ be the pocket of $P$ with $p_1p_2$ being the lid of $P'$.
For any edge $p_1p_2$ of $CH(P)$ computed by Jarvis march, if $p_1p_2$ has no intersection with $rint(Q)$, we output both $p_1$ and $p_2$, in that order.
Otherwise, we find the geodesic shortest path from $p_1$ to $p_2$ in simple polygon $P'' = P' \backslash rint(Q)$.

We modify the algorithm given by Asano~\etal~\cite{journals/jocg/2011AsanoMRW} to compute the geodesic shortest path from $p_1$ to $p_2$ in $P''$.
Noting the boundary of $P''$ is defined by sections of boundaries of $P$ and $Q$, the modification presented herewith involves computing sections of $P''$ in online fashion as and when required while maintaining structures (cones) required for the algorithm given in \cite{journals/jocg/2011AsanoMRW} to compute a shortest path from $p_1$ to $p_2$.
The algorithm for this problem in \cite{journals/jocg/2011AsanoMRW} maintains a vertex $p$ of $P''$ and two points $r'$ and $r''$ on $bd(P'')$ such that the line segments $pr'$ and $pr''$ lie in $P''$ with $pr''$ located counterclockwise from $pr'$. 
And, it proceeds by updating the triple $(p, r', r'')$ while ensuring at every juncture the geodesic shortest path from $p_1$ to $p_2$ passes through $p$ and the point $p_2$ lies in the subpolygon of $P''$ that is cut off by line segments $pr'$ and $pr''$.

\begin{figure}[ht]
\centering
\includegraphics[width=0.35\textwidth]{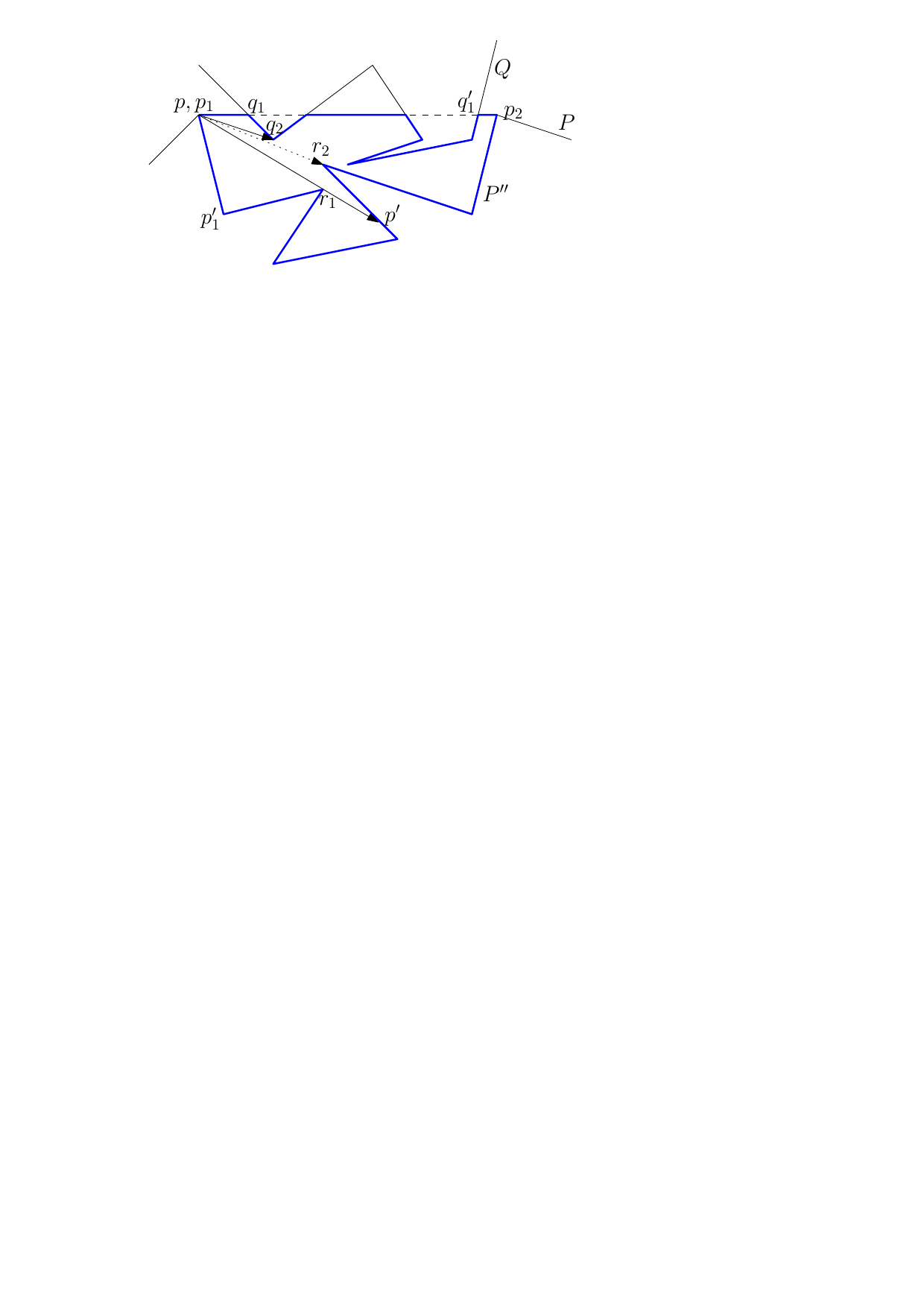}
\caption{Illustrating $P''$ in blue, whose boundary is made up of two polygonal chains:  
$p_1, q_1, q_2, \ldots, q_1', p_2$ consisting of sections of line segment $p_1p_2$ together with sections of $bd(Q)$, and 
$p_1, p_1', \ldots, p_2$ consisting of a contiguous section of $bd(P)$.
Also, illustrates extending ray $p_1r_1$ of cone $C_{p_1}(p_1r_1, p_1q_1)$ to $p'$, and cone $C_{p_1}(p_1p', p_1q_1)$ leading to determining the next cone $C_{p_1}(p_1r_2, p_1q_1)$.}
\label{fig:rhsimpsimp}
\end{figure}

The boundary of $P''$ consists of $P \cap P'$, the maximal sections of $p_1p_2$ that are bounding $P' \backslash Q$, and $bd(Q) \cap P'$.
Refer to Fig.~\ref{fig:rhsimpsimp}.
Note that some of the vertices of $P''$ are incident to $p_1p_2$.
Due to memory constraints, we do not pre-compute $P''$.
Instead, based on the need, we determine the vertex of interest on-the-fly.
With one traversal of $bd(Q)$, we find an arbitrary vertex $r$ of $Q$ located in the half-plane defined by the line induced by $p_1p_2$ that does not contain $P'$. 
When traversing $bd(Q)$ in counterclockwise direction starting at $r$, let $q_1$ (resp. $q_1'$) be the first (resp. last) intersection of $bd(Q)$ with $P'$, and let $q_2$ be the first vertex of $bd(Q)$ following $q_1$ that is located in $P'$.

Next, we describe an algorithm to compute $q_1, q_1'$, and $q_2$.
Starting at $r$, for every edge $e$ that occurs when $bd(Q)$ is traversed, we find the intersection of line segment $p_1p_2$ with $e$.
The point $q_1$ (which is a vertex of $P''$) is the first point of intersection determined.
Analogously, in this traversal of $bd(Q)$, the point $q_1'$ is the last point of intersection of $e$ with an edge of $Q$.
We note that only a portion of the section of boundary from $q_1$ to $q_1'$, when traversing $bd(Q)$ in counterclockwise direction, participates in $bd(P'')$. 
Significantly, no part of this section of $bd(Q)$ from $q_1$ to $q_1'$ can be in any other pocket of $P$.
The vertex $q_2$ of $bd(Q)$ that follows $q_1$ is found by traversing $bd(Q)$ in counterclockwise direction, starting at $q_1$.
Note that $q_1, q_1'$ and $q_2$ are found by traversing $bd(Q)$ only once. 
This traversal, together with finding intersections of edges of $bd(Q)$ with line segment $p_1p_2$, uses only $O(1)$ workspace.
Upon knowing $q_1$ and $q_2$, we invoke the shortest path algorithm in \cite{journals/jocg/2011AsanoMRW} with cone $C_{p_1}(p_1q_2, p_1q_1)$.

Suppose the algorithm in \cite{journals/jocg/2011AsanoMRW} intends to extend the ray $pr_1$ of cone $C_p(pr_1, pr_2)$, where $r_1$ is a vertex of $P''$.
By traversing $bd(P)$ in counterclockwise direction from $p_1$ to $p_2$, we could determine the point of intersection $p'$ of ray $pr_1$ with $bd(P)$.
Analogously, by traversing $bd(Q)$ in counterclockwise direction from $q_1$ to $q_1'$, we could find the point of intersection $q'$ of ray $pr_1$ with $bd(Q)$.
These traversals to find $p'$ and $q'$ use $O(1)$ workspace.
If both $p'$ and $q'$ exist, we determine the one among these two points closest to point $r_1$ along the ray $pr_1$.
Significantly, the section of $bd(P \cap P')$ traversed in counterclockwise direction from $r_1$ in determining $p'$ is not going to be traversed again in the algorithm.
Next, when the algorithm needs to find a ray's intersection with $bd(P \cap P')$, we find this intersection by traversing $bd(P \cap P')$ starting at $p'$ (instead of starting at $p_1$).
The same is true with the sections of $bd(Q \cap P')$ traversed in finding points of intersections with subsequent rays with $bd(Q)$.
In every iteration, a contiguous section of either $bd(P' \cap P)$ or $bd(P' \cap Q)$ is determined not to be considered in later iterations.
In the worst case, for every edge of $bd(P' \cap P)$ (resp. $bd(P' \cap Q)$) exempted from further consideration, $bd(P' \cap Q)$ (resp. $bd(P' \cap P)$) is traversed at most once, leading to the quadratic time complexity in the worst case.
Further, we note, in processing a pocket of $P$, the portions of $bd(P)$ and $bd(Q)$ located in other pockets of $P$ are not considered.
That is, $P$ is split into pockets, and processing each such pocket takes time quadratic in its size while using $O(1)$ workspace. 

Consider any cone $C_p(pr_1, pr_2)$.
Suppose $r_1$ (resp. $r_2$) is incident on $P \cap P'$.
Then, based on the indices of endpoints of the edge on which $r_1$ (resp. $r_2$) is located, in constant time, we determine the location of $p_2$ with respect to cone $C_p(pr_1, pr_2)$. 
The location of $p_2$ defines the portion of $P''$ that is to be ignored from further consideration.
This defines the simply connected region of $P''$ that is to be processed further for finding the shortest path from $p_1$ to $p_2$.
Finding this region involves updating the current cone, the apex of which will be an intermediate vertex on the shortest path from $p_1$ to $p_2$.
Indeed, every intermediate vertex along the shortest path from $p_1$ to $p_2$ in $P''$ is the apex of some cone considered by the algorithm.
Analogously, if $r_1$ is incident on $bd(P' \cap Q)$, we determine the location of $p_2$ with respect to cone $C_p$ in $O(1)$ time using $O(1)$ workspace.
Hence, for our algorithm, unlike in \cite{journals/jocg/2011AsanoMRW}, there is no need to use the trapezoidal decomposition for locating $p_2$ relative to any of the cones considered.
Finally, we note that every vertex output by our algorithm (apex of a cone) is a vertex of $RH(P | Q)$.

\begin{theorem}\label{proof:simpinsimp}
The algorithm to compute the relative hull of a simple polygon $P$ when $P$ is located inside simple polygon $Q$ correctly computes $RH(P | Q)$, and it takes $O(|P|^2+|Q|^2)$ time while using $O(1)$ workspace.
\end{theorem}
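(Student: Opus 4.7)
\smallskip\noindent
\textbf{Proof proposal.}
The plan is to split the argument into correctness and complexity, each further decomposed according to the two phases of the algorithm: the outer Jarvis march that produces the edges of $CH(P)$, and the inner routine that replaces every such edge intersecting $rint(Q)$ with a geodesic shortest path in $P'' = P' \setminus rint(Q)$.

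For correctness, first I would invoke the standard structural fact (implicit in Toussaint~\cite{journals/sigproc/1986toussaint}) that $RH(P|Q)$ is obtained from $CH(P)$ by replacing each lid $e$ of a pocket $P'$ of $P$ whose lid meets $rint(Q)$ with the geodesic shortest path in $P' \setminus rint(Q)$ between the endpoints of $e$, and by keeping the lid otherwise. Hence it suffices to argue that (i) Jarvis march started at the leftmost vertex $p'$ of $P$ enumerates exactly the edges of $CH(P)$ in clockwise order, which is standard; (ii) the test whether $p_1p_2$ meets $rint(Q)$ is performed correctly by the single $bd(Q)$ traversal that locates $q_1$ and $q_1'$; and (iii) the modified cone-based shortest path routine produces a geodesic shortest path in $P''$. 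For (iii), I would argue that the invariant maintained by the routine of Asano~\etal~\cite{journals/jocg/2011AsanoMRW} (namely, that the running triple $(p,r',r'')$ keeps $p_2$ inside the subpolygon cut off by the two rays from $p$ and keeps $p$ on the shortest path from $p_1$ to $p_2$) still holds: the only change is how the intersection of a ray with $bd(P'')$ is computed. Since we always take the closest of the two ray-boundary intersections found by independently walking $bd(P)\cap P'$ and $bd(Q)\cap P'$, the intersection returned is precisely the first hit against $bd(P'')$, which is all the original algorithm uses. The replacement of the trapezoidal-decomposition-based point-location step by the constant-time test that inspects the edge indices at which $r_1$ and $r_2$ lie also preserves correctness, because within a single pocket the boundary splits cleanly into the $P$-side chain and the $Q$-side chain, and the side of a cone on which $p_2$ lies is determined by the monotone ordering of these indices.

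For the time bound I would argue that the Jarvis march contributes $O(|P|\cdot h)=O(|P|^2)$ work, where $h \le |P|$ is the number of hull vertices. For each pocket $P'$, let $a$ be the number of edges of $bd(P)$ and $b$ the number of edges of $bd(Q)$ lying in $P'$. The key observation, already stated informally in the excerpt, is that each iteration of the cone update consumes (and permanently discards from further consideration within this pocket) a contiguous chunk either of $bd(P)\cap P'$ or of $bd(Q)\cap P'$; but while doing so it may traverse the currently-live portion of the \emph{other} chain to find the nearer ray-boundary intersection. Charging each such traversal to the chain being advanced, the total work in the pocket is $O((a+b)^2) = O(a^2+b^2)$. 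Locating $q_1, q_1', q_2$ at the start of the pocket is a single $O(b)$ scan of $bd(Q)$, during which edges of $Q$ lying in other pockets can be skipped without revisiting. Summing $a^2+b^2$ over all pockets and using $\sum a = O(|P|)$, $\sum b = O(|Q|)$, together with $(\sum x_i)^2 \ge \sum x_i^2$, yields a total of $O(|P|^2+|Q|^2)$. Adding the outer Jarvis march cost gives the claimed worst-case time.

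For the space bound I would simply enumerate the quantities the algorithm ever needs to store simultaneously: the current Jarvis-march pivot $p_1$ and candidate $p_2$; the current pocket's lid endpoints and the reference vertex $r$ of $Q$; the triple $(p,r_1,r_2)$ describing the active cone; and a constant number of pointers used by the on-the-fly walks over $bd(P)$ and $bd(Q)$ (in particular the ``resume'' pointers $p'$ and $q'$ on the two chains). Each is $O(1)$ words, and no auxiliary data structure is ever materialised; vertices of $RH(P|Q)$ are streamed to the output as they are identified (hull vertices when a lid is accepted, cone apices when the shortest path routine advances). The main obstacle, in my view, is the clean amortisation in the pocket analysis: one must verify that a ray traversal that produces no ``progress'' on, say, the $P$-chain nevertheless still bounds the total work by charging it against the permanently discarded suffix of the $Q$-chain, so that no section of either chain is scanned more than $O(a+b)$ times in total. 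Once that charging is laid out, both the time and space claims follow.
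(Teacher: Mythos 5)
Your proposal is correct and follows essentially the same route as the paper's own proof: the same three-part correctness decomposition (Jarvis march for $CH(P)$, the $bd(Q)$ traversal test per pocket, and preservation of the cone invariant of Asano~\etal\ under on-the-fly ray--boundary intersection and index-based location of $p_2$), the same per-pocket quadratic charging argument for the time bound, and the same streaming-output argument for $O(1)$ workspace. Your explicit summation of $a^2+b^2$ over pockets is in fact a slightly cleaner write-up of the amortisation the paper states informally, but it is not a different approach.
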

 \begin{proof}
The correctness of our algorithm relies on the following: (i) computing vertices of $CH(P)$, for every two adjacent vertices $p_1, p_2$ of $CH(P)$, determining (ii) whether $bd(Q)$ intersects the relative interior of pocket $P'$ with lid $p_1p_2$, and (iii) when $P' \cap bd(Q) \ne \emptyset$, finding a shortest path between $p_1$ and $p_2$.
For (i), our algorithm uses the Jarvis march.
By traversing $bd(Q)$, our algorithm determined (ii), that is, whether $bd(Q)$ intersects $P'$.
To find the shortest path between $p_1$ and $p_2$ in $P''$, we use the modified algorithm from \cite{journals/jocg/2011AsanoMRW}.
 However, since $P''$ cannot be saved in $O(1)$ workspace, the correctness of (iii) directly relies on correctly updating cones in every iteration, which also involves extending rays incident to concave vertices. 
By traversing $bd(P') \cap bd(P)$ and $P' \cap bd(Q)$, we correctly extend a ray wherever the algorithm in \cite{journals/jocg/2011AsanoMRW} needs.
In addition, in updating the cone, as needed by \cite{journals/jocg/2011AsanoMRW}, both $bd(P') \cap bd(P)$ and $P' \cap bd(Q)$ are considered.
Further, by considering the following cases, the location of $p_2$ with respect to a cone $C_p(pr_1, pr_2)$ is determined correctly: 
(a) $p_2$ is located in the simple polygon formed by $pr_1, pr_2$, and the boundary of $P''$ from $r_1$ to $r_2$ is traversed in counterclockwise direction, 
(b) $p_2$ is located in the simple polygon formed by $pr_1$ and $bd(P'')$ from $r_1$ to $p$ in clockwise direction, or
(c) $p_2$ is located in the simple polygon formed by $pr_2$ and $bd(P'')$ from $r_2$ to $p$ in counterclockwise direction.
Noting the location of $p_2$ with respect to both $bd(P) \cap P'$ and $bd(P'') \backslash bd(P)$, $p_2$ is located correctly by the indices of endpoints of the edge on which $r_1$ (or $r_2$) is incident.
These claims hold good though $bd(Q)$ intersects $P'$.
The algorithm in \cite{journals/jocg/2011AsanoMRW} outputs vertices of the shortest path from $p_1$ to $p_2$, as and when they are determined.
That is, the apex of every cone that has a distinct apex from the previous one is output as a vertex on the shortest path from $p_1$ to $p_2$.

The output-sensitive Jarvis march on the vertices of $P$ takes $O(|P|^2)$ time.
For any pocket $P'$, determining whether $bd(Q)$ intersects $P'$ takes $O(|Q|)$ time.
For any ray $pr_1$ with concave vertex $r_1 \in bd(P)$, extending $pr_1$ involves traversing $bd(P \cap P')$ and $bd(P' \cap Q)$ in clockwise directions, while starting at the point on the boundary where the last traversal terminated on each of these.
As argued above, in the worst case, the total cost of all such traversals together takes quadratic time using constant workspace.
Locating $p_2$ with respect to a given cone takes only constant time by looking at the indices of edges on which a ray of that cone is incident.
The rest of the time complexity is due to the algorithm in \cite{journals/jocg/2011AsanoMRW}, which is upper bounded by $O(|P''|^2)$.
Since the algorithm in \cite{journals/jocg/2011AsanoMRW} outputs intermediate vertices defining shortest paths as soon as they are discovered, that is, without saving them, our algorithm uses only $O(1)$ workspace. 
Summing over the work involved in processing all the pockets of edges of $CH(P)$ together yields the time complexity mentioned. 
\end{proof}

\section{Relative hull of a simple polygon $P$ when another simple polygon located alongside $P$}
\label{sect:simpalongsimp}

Given two simple polygons $P$ and $Q$ with $rint(P) \cap Q = \emptyset$ but $rint(CH(P)) \cap Q \ne \emptyset$, the constant workspace algorithm proposed in this section finds the relative hull of $P$ with respect to $Q$, again denoted by $RH(P|Q)$.  
$RH(P|Q)$ is a weakly simple polygon with a perimeter of minimum length such that $P \subseteq RH(P | Q) \subseteq CH(P)$ and $RH(P|Q) \cap rint(Q) = \emptyset$.
For $S'$ being the union of the set comprising the vertices of $P$ and the set comprising the vertices of $Q$, for convenience, we assume no two points in $S'$ have the same $x$-coordinates or $y$-coordinates, and no three points in $S'$ are collinear.

Since the outline of our algorithm is the same as the algorithm devised for this problem in Toussaint~\cite{journals/dcg/1989Toussaint}, we first provide a high-level description of their algorithm. 
An edge $e$ of $CH(P \cup Q)$ is called a {\it bridge} if one endpoint of $e$ is a vertex of $P$ and the other endpoint of $e$ is a vertex of $Q$. 
Based on the number of bridges, their algorithm determines whether $Q$ belongs to a pocket of $P$.
If $Q$ is not contained in any pocket of $P$, there exist two bridges, say $p_1q_1$ and $p_2q_2$, on $CH(P \cup Q)$.
Let $p_1, p_2$ (resp. $q_1, q_2$) be the endpoints of these two bridges on $bd(P)$ (resp. $bd(Q)$).
By applying Lee and Preparata~\cite{journals/networks/LeePrep84}'s algorithm, their algorithm finds the geodesic shortest path $\tau$ from $p_1$ and $p_2$ in the simple polygonal region bounded by these two bridges, a section of $bd(P)$ from $p_1$ and $p_2$, and a section of $bd(Q)$ from $q_1$ to $q_2$.
Next, they consider the case in which $Q$ belongs to a pocket of $P$.
Let $P'$ be the pocket of $P$ containing $Q$, and let $p_1$ and $p_2$ be the endpoints of the lid of $P'$.
Their algorithm computes the geodesic shortest path $\tau$ between $p_1$ and $p_2$ in the annulus $P'' = P' \backslash rint(Q)$.
In both cases, the boundary cycle formed by concatenating $\tau$ with a section of $bd(CH(P))$ from $p_2$ and $p_1$ contains $P$, and the closed region defined by this cycle is shown to be $RH(P | Q)$.
Their algorithm takes $O(|P|+|Q|)$ time in the worst-case, but it uses $O(|P|+|Q|)$ workspace.
In this section, by modifying this algorithm, we devise a constant workspace algorithm for this problem.

\begin{figure}[h!]
    \centering
    \subfigure{
    \includegraphics[width=0.33\textwidth]{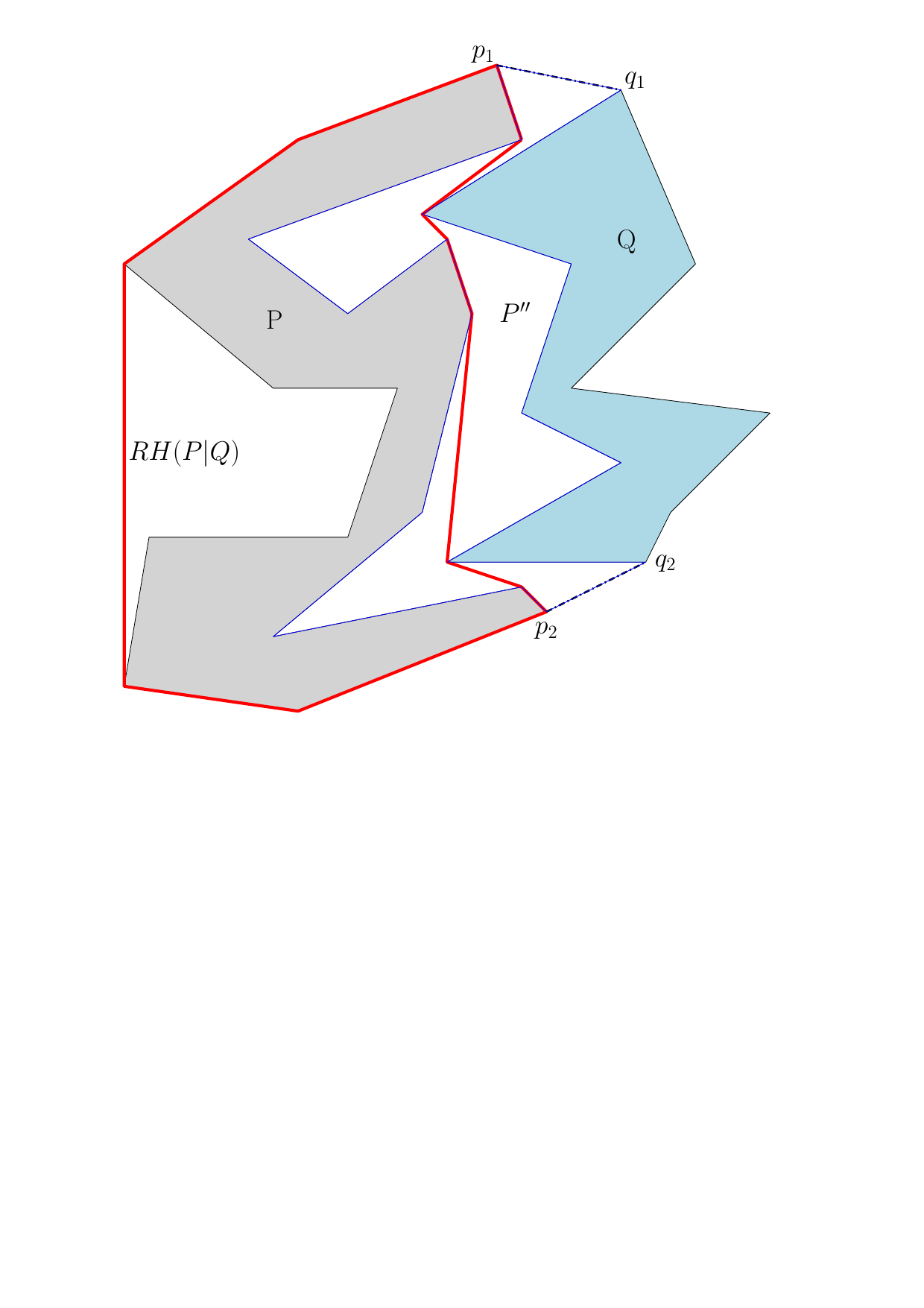}
    }
    \hspace{0.05\textwidth}
    \subfigure{
    \includegraphics[width=0.34\textwidth]{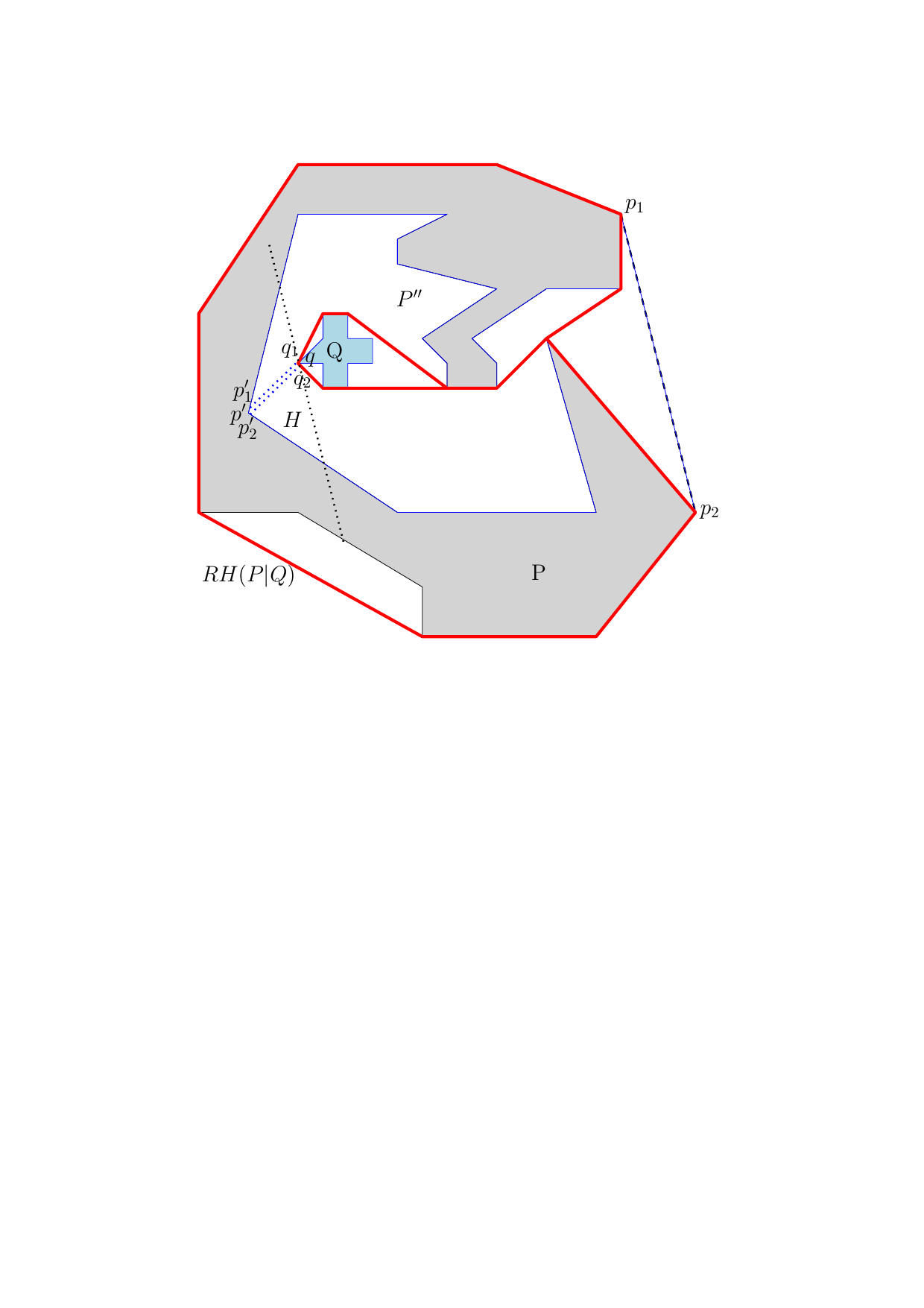}
    }  
    \caption{The figure on the left illustrates case~(i) in which $Q$ does not lie in a pocket of $P$.
    This figure also shows the existence of two bridges $p_1q_1$ and $p_2q_2$.
    The figure on the right illustrates case~(ii) in which $Q$ belongs to a pocket $P'$ of $P$.
    In both the figures, $bd(P'')$ is in blue and $RH(P | Q)$ is in red.
    }
    \label{fig:rhalongside}
\end{figure}

Since $Q$ is placed alongside $P$, from \cite{journals/dcg/1989Toussaint}, the number of bridges on the boundary of $CH(P \cup Q)$ is either zero or two.
Two bridges on $bd(CH(P \cup Q))$ signify case~(i) wherein $Q$ is not contained in $CH(P)$.
In case~(ii), there are no bridges on $bd(CH(P \cup Q))$; this signifies $Q$ belongs to a pocket of $P$, that is, $Q$ is contained in $CH(P)$.
Refer to Fig.~\ref{fig:rhalongside}.
For computing edges of $CH(P \cup Q)$, since Jarvis march uses $O(1)$ workspace, our algorithm applies Jarvis march to the set comprising the vertices of $P$ and the vertices of $Q$.
For every edge computed by this Jarvis march, our algorithm determines whether it is a bridge.
From this, we determine whether the input falls in case~(i) or in case~(ii).
For handling case~(i), our algorithm stores the two bridges it identified, say $p_1q_1$ and $p_2q_2$, wherein $p_1, p_2$ are the vertices of $P$, $q_1, q_2$ are the vertices of $Q$, while $p_1, q_1, q_2, p_2$ occur in that order when $bd(CH(P \cup Q))$ is traversed in clockwise direction starting at $p_1$.
And for handling case~(ii), we will later discuss how our algorithm finds the endpoints $p_1, p_2$ of the lid of the pocket $P'$ that contains $Q$, wherein $p_2$ follows $p_1$ when $bd(CH(P))$ is traversed in clockwise direction starting at $p_1$.

First, we present an algorithm for case~(i).
Let $P''$ be the simple polygon in $CH(P \cup Q) \backslash rint(P \cup Q)$ that has both the bridges $p_1q_1$ and $p_2q_2$ on its boundary.
Let $SP_{P''}(p_1, p_2)$ be the shortest path from $p_1$ to $p_2$ in $P''$.
Also, let $S(p_1, p_2)$ be the section of $bd(CH(P))$ between $p_1$ and $p_2$ such that no point on it belongs to $rint(P'')$. 
The concatenation of $SP_{P''}(p_1, p_2)$ and $S(p_1, p_2)$ is a cycle that contains $P$.
And, the closed region enclosed by this cycle is $RH(P|Q)$. 
Refer to the left subfigure of Fig.~\ref{fig:rhalongside}.
For computing the geodesic shortest path from $p_1$ to $p_2$ in $P''$, we apply the constant workspace algorithm presented in Section~\ref{sect:simpinsimp}.
Though here $p_1$ and $p_2$ are not necessarily visible to each other even in the absence of $Q$, the correctness of that algorithm does not get affected.
Again, instead of computing and saving $P''$, we compute edges of $bd(P'')$ as required by that algorithm. 
Starting at $p_2$, using Jarvis march, we compute edges of $S(p_1, p_2)$, until $p_1$ is encountered.
This takes $O(|P|^2)$ time and uses $O(1)$ workspace.

\begin{lemma}
\label{lem:caseianal}
Given $P$ and $Q$, determining that $Q$ does not belong to $CH(P)$ together with computing $RH(P | Q)$ takes $O(|P|^2+|Q|^2)$ time using $O(1)$ workspace.
\end{lemma}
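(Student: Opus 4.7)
The plan is to decompose the claim into three subroutines that mirror the algorithm described immediately above the lemma: (a) detecting, via Jarvis march, that $Q$ is not contained in any pocket of $P$ and extracting the two bridges; (b) computing the geodesic shortest path $SP_{P''}(p_1,p_2)$ between the bridge endpoints on $P$; and (c) emitting the hull chain $S(p_1,p_2)$. I would then separately verify correctness and charge the time and workspace against each subroutine.

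First I would run Jarvis march on the combined vertex set of $P$ and $Q$. Since each gift-wrapping step scans all $|P|+|Q|$ vertices and uses $O(1)$ workspace, the entire traversal of $bd(CH(P \cup Q))$ takes $O((|P|+|Q|)^2)$ time in the worst case. As each consecutive pair of hull vertices is produced, we test in $O(1)$ time whether it is a bridge by checking which of $P$ or $Q$ each endpoint belongs to. By the structural fact from \cite{journals/dcg/1989Toussaint} that $Q$ alongside $P$ yields either zero or two bridges, finding exactly two bridges certifies case~(i); we retain the four endpoints $p_1,q_1,p_2,q_2$ in $O(1)$ workspace, oriented so that $p_1,q_1,q_2,p_2$ appear in clockwise order around $CH(P\cup Q)$.

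Next, I would argue that invoking the Section~\ref{sect:simpinsimp} shortest-path procedure on the polygon $P''$ bounded by the two bridges, the counterclockwise section of $bd(P)$ from $p_2$ to $p_1$, and the section of $bd(Q)$ from $q_1$ to $q_2$ correctly returns $SP_{P''}(p_1,p_2)$. The polygon $P''$ is simple, and the four retained endpoints let us generate any requested portion of $bd(P'')$ on-the-fly by traversing $bd(P)$ or $bd(Q)$ between the appropriate stored vertices, exactly as in Section~\ref{sect:simpinsimp}. Theorem~\ref{proof:simpinsimp} then applies verbatim, yielding the shortest path in $O(|P''|^2) = O((|P|+|Q|)^2)$ time and $O(1)$ workspace, with vertices streamed to the output as they become apices of cones. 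The shortest path together with the section $S(p_1,p_2)$ of $bd(CH(P))$, which we generate by resuming Jarvis march on the vertices of $P$ starting from $p_2$ and terminating at $p_1$ in $O(|P|^2)$ time and $O(1)$ workspace, bounds a cycle whose enclosed region is $RH(P|Q)$ by the argument in \cite{journals/dcg/1989Toussaint}.

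The main obstacle I anticipate is justifying that the Section~\ref{sect:simpinsimp} routine remains correct in this setting, because $p_1$ and $p_2$ need not be mutually visible even before $Q$ is inserted, and $P''$ is an annular region cut open at the two bridges rather than a pocket of a convex hull. I expect to resolve this by observing that the routine only needs (i) a simple polygon computable on demand, and (ii) a distinguished start and target vertex on its boundary; both requirements are met, so the cone-update and ray-extension invariants proved in Theorem~\ref{proof:simpinsimp} carry over without change. Summing the three contributions $O((|P|+|Q|)^2) + O((|P|+|Q|)^2) + O(|P|^2)$ gives $O(|P|^2+|Q|^2)$ time, and since at every juncture only a constant number of vertices and cone parameters are retained, total workspace remains $O(1)$.
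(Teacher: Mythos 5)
Your proposal is correct and follows essentially the same decomposition as the paper's proof: Jarvis march on the combined vertex set to detect the two bridges, the Section~\ref{sect:simpinsimp} routine to compute $SP_{P''}(p_1,p_2)$ with $bd(P'')$ generated on demand, and a second Jarvis march on the vertices of $P$ for $S(p_1,p_2)$, with all vertices streamed to the output. Your additional remark that the shortest-path routine tolerates $p_1$ and $p_2$ not being mutually visible matches the observation the paper makes in the text preceding the lemma.
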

\begin{proof}
Finding two bridges by computing $CH(P \cup Q)$ with Jarvis march takes $O(|P|^2+|Q|^2)$ time.
For computing $S(p_1, p_2)$, Jarvis march on the vertices of $P$ takes $O(|P|^2)$ time.
The algorithm presented in Section~\ref{sect:simpinsimp} to compute $SP_{P''}(p_1, p_2)$ takes $O(|P|^2+|Q|^2)$ time.
Since Jarvis march uses $O(1)$ workspace, since the algorithm in Section~\ref{sect:simpinsimp} also takes $O(1)$ workspace, and since the vertices of $RH(P|Q)$ are output as and when they are determined, this algorithm is a constant workspace algorithm.
\end{proof}

As mentioned, when there is no bridge on $bd(CH(P \cup Q))$, the input falls into case~(ii).
In this case, our algorithm finds the lid of the pocket in which $Q$ resides. 
For any simple polygon $P'$ and a point $q'$, it is well-known that $q'$ belongs to $rint(P')$ if and only if the number of intersections of the horizontal line passing through $q'$ with $P'$ to the left of $q'$ is odd.
Let $q$ be any vertex of $Q$.
We compute $bd(CH(P))$ using Jarvis march on the vertices of $P$, starting at any vertex of $P$. 
As and when any edge $p_1p_2$ of $bd(CH(P))$ is determined, we check whether that edge is the lid of a pocket that contains $Q$.
This is accomplished by a single traversal of the boundary of the pocket while determining the number of times a horizontal ray from $q$ intersects the boundary of that pocket to the left of $q$.
If this number is odd, then $Q$ belongs to that pocket.
This computation takes time linear in the number of edges of that pocket while using $O(1)$ workspace.
If this pocket does not contain $Q$, then we continue computing $bd(CH(P))$ with Jarvis march, starting at $p_2$.
Since $Q$ has already been found to be in one of the pockets of $P$, one of these iterations is guaranteed to be successful in finding the pocket that contains $Q$.
Since only Jarvis march is used in finding the pocket containing $Q$, only a constant workspace suffices for this computation. 

Let $p_1$ and $p_2$ be the endpoints of the lid of a pocket $P'$ in which $Q$ resides such that $p_2$ follows $p_1$ when traversing $bd(CH(P))$ in clockwise direction.
By traversing $bd(Q)$, we find a vertex $q$ of $Q$ that is farthest from the line passing through $p_1$ and $p_2$.
It is obvious that $q$ is a vertex of $RH(P | Q)$. 
Finding $q$ takes $O(|Q|)$ time using $O(1)$ workspace.
Let $H$ be the half-plane not containing $Q$ defined by the line parallel to $p_1p_2$ that passes through $q$.
By traversing $bd(P)$, we find a vertex $p'$ of $P$ located in $H$ such that the relative interior of $p'q$ belongs to $P' \backslash Q$. 
Since $bd(P')$ is a cycle and since $Q$ is not in $H$, such a $p'$ is guaranteed to exist in $H$. 
Refer to the right subfigure of Fig.~\ref{fig:rhalongside}.
Finding such a $p'$ takes $O(|P|^2)$ time using $O(1)$ workspace.
To facilitate finding shortest paths using the constant workspace algorithm, we define simple polygon $P''$ with the help of line segment $p'q$.
And, since the $bd(P'')$ consists of sections of boundaries of $P$ and $Q$, to find geodesic shortest paths between any two points in $P''$, we use the algorithm presented in Section~\ref{sect:simpinsimp}, which is a modification to the algorithm given in Asano~\etal~\cite{journals/jocg/2011AsanoMRW}.
Let $p_1', p_2'$ be two points that are infinitesimally close to $p'$ on $bd(P)$ such that points $p_1', p', p_2'$ are encountered in that order when $bd(P)$ is traversed in clockwise direction starting at $p_1'$.
Also, let $q_1, q_2$ be two points that are infinitesimally close to $q$ on $bd(Q)$ such that points $q_1, q, q_2$ are encountered in that order when $bd(Q)$ is traversed in counterclockwise direction starting at $q_1$.
Among line segments $p_1'q_1$ and $p_2'q_2$, without loss of generality, suppose $p_1'q_1$ is in $\mathbb{R}^2 \backslash (rint(P) \cup rint(Q))$.
The closed region bounded by line segment $p_1p_2$, polygonal chain from $p_2$ to $p'$ along $bd(P')$ when $bd(P')$ is traversed in clockwise direction starting at $p_2$, line segment $p'q$, polygonal chain from $q$ to $q_1$ along $bd(Q)$ when $bd(Q)$ is traversed in clockwise direction starting at $q$, line segment $q_1p_1'$, and the polygonal chain from $p_1'$ to $p_1$ along $bd(P')$ when $bd(P')$ is traversed in clockwise direction starting at $p_1'$, is a simple polygon $P''$.
Then, we compute two shortest paths in $P''$, each with a constant workspace algorithm presented in Section~\ref{sect:simpinsimp}, one between $p_1$ and $q_1$ and the other between $p_2$ and $q$.
Our algorithm outputs as and when any of the vertices on these shortest paths are computed.
Significantly, for computing these shortest paths, like in the algorithm presented in Section~\ref{sect:simpinsimp}, we do not explicitly compute $P''$.
We note that those vertices shared by the shortest path from $p_1$ to $q_1$ and the shortest path from $p_2$ to $q$ are output two times by our algorithm; however, since the objective is to output the boundary cycle of $RH(P|Q)$, the algorithm is correct in outputting these edges twice.
By applying Jarvis march to the vertices of $P$ that are encountered while traversing $bd(P)$ in clockwise direction starting at $p_2$, we compute the section of $bd(CH(P))$ from $p_2$ to $p_1$, that is, Jarvis march continues until it outputs $p_1$.
Note that both $p_1$ and $p_2$ were already determined to be on $bd(CH(P))$.
Again, as each vertex of this section of $bd(CH(P))$ is determined, we output that vertex.

\begin{lemma}
\label{lem:caseiianal}
Given $P$ and $Q$, determining that $Q$ lies in a pocket of $P$ together with computing $RH(P | Q)$ takes $O(|P|^2+|Q|^2)$ time using $O(1)$ workspace.
\end{lemma}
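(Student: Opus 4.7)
The plan is to establish correctness and complexity separately, mirroring the structure of the proof of Lemma~\ref{lem:caseianal}. For correctness, I would first invoke Toussaint's characterization from \cite{journals/dcg/1989Toussaint}: since $Q$ is placed alongside $P$, the boundary $bd(CH(P \cup Q))$ carries either two bridges or none, so detecting zero bridges through one pass of Jarvis march on $P \cup Q$ unambiguously certifies case~(ii). Next, I would argue that the pocket containing $Q$ is correctly located: fix any vertex $q^\star$ of $Q$ and, as each edge $p_1p_2$ of $bd(CH(P))$ is produced by Jarvis march on $P$, run the horizontal-ray parity test for $q^\star$ against the boundary of the pocket with lid $p_1p_2$. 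The parity is odd for exactly one pocket, by the Jordan curve theorem together with $Q \subset CH(P)$.

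For the geometric heart of the argument, I would verify that the two shortest paths computed in $P''$ assemble correctly into $RH(P|Q)$. The vertex $q$ chosen as the farthest $Q$-vertex from the line through $p_1p_2$ must be a vertex of $RH(P|Q)$, since every relative hull must wrap around $q$ on the side away from the lid; hence the geodesic from $p_1$ to $p_2$ in the annulus $P' \setminus rint(Q)$ bends at $q$. Cutting the annulus along the segment $p'q$ turns it into the simple polygon $P''$ and splits that geodesic into a shortest path from $p_1$ to $q_1$ and a shortest path from $q$ to $p_2$ in $P''$; invoking the constant-workspace shortest-path routine from Section~\ref{sect:simpinsimp} produces both. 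Concatenating these two paths with the arc of $bd(CH(P))$ from $p_2$ back to $p_1$, streamed by a further Jarvis march, yields the boundary cycle of $RH(P|Q)$.

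For complexity, I would bound each subroutine: Jarvis march on $P \cup Q$ in $O((|P|+|Q|)^2)$ for case detection; Jarvis march on $P$ with per-edge ray-parity tests in $O(|P|^2)$ for pocket location, since the boundary traversals summed across pockets are $O(|P|)$ and are dominated by the Jarvis cost; a single scan of $bd(Q)$ in $O(|Q|)$ to find $q$; a brute search in $O(|P|^2)$ for $p'$; two invocations of the Section~\ref{sect:simpinsimp} algorithm at $O(|P|^2+|Q|^2)$ each by Theorem~\ref{proof:simpinsimp}; and a final Jarvis march along the $CH(P)$ arc in $O(|P|^2)$. Summing gives $O(|P|^2+|Q|^2)$. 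The workspace bound follows because every subroutine is itself an $O(1)$-workspace algorithm, each output vertex is streamed as soon as it is discovered, and the only persistent state carried between phases is a constant number of vertex indices, namely $p_1$, $p_2$, $p'$, $q$, $p_1'$, and $q_1$.

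The main obstacle I anticipate is the geometric correctness of the cut along $p'q$: one must verify that the chosen segment $p_1'q_1$ lies outside $rint(P) \cup rint(Q)$ so that $P''$ is genuinely a simple polygon, and that the two geodesics $SP_{P''}(p_1,q_1)$ and $SP_{P''}(p_2,q)$ never cross $p'q$, so that their union coincides with the geodesic in the annulus. This reduces to a local-shortening argument: any crossing of $p'q$ by either geodesic could be shortcut on the same side of $Q$, contradicting minimality. Once this is in hand, the remainder is careful accounting over known subroutines.
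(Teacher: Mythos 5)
Your proposal follows essentially the same route as the paper's proof: detect case~(ii) via zero bridges on $CH(P\cup Q)$ by Jarvis march, locate the pocket by the horizontal-ray parity test per lid, find $q$ and $p'$, cut the annulus along $p'q$, run the Section~\ref{sect:simpinsimp} shortest-path routine twice, and close the cycle with a Jarvis march along $bd(CH(P))$, with the identical term-by-term cost accounting summing to $O(|P|^2+|Q|^2)$ and the same streaming argument for $O(1)$ workspace. The additional correctness discussion you include (that $\tau$ must pass through $q$ and hence splits cleanly at the cut) is sound and goes slightly beyond the paper's lemma proof, which confines itself to the complexity accounting and leaves that justification to the surrounding text.
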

\begin{proof}
Determining that there are no bridges on $CH(P \cup Q)$ by computing $CH(P \cup Q)$ with Jarvis march takes $O(|P|^2+|Q|^2)$ time using $O(1)$ workspace.
With Jarvis march on the vertices of $P$, computing $CH(P)$ takes $O(|P|^2)$ time using $O(1)$ workspace.
For every edge $e$ computed on $bd(CH(P))$, determining whether $e$ is the lid of a pocket of $P$ takes $O(1)$ time using $O(1)$ workspace.
If $e$ is found to be a lid of the pocket $P'$, determining whether $Q$ lies in the pocket defined by $e$ takes $O(|P'|)$ time using $O(1)$ workspace.
As detailed, this involves locating a vertex $q$ of $Q$ by checking the number of intersections of $bd(P')$ with the horizontal line passing through $q$ to the left of $q$.
Hence, finding the pocket containing $Q$ takes $O(|P|)$ time using $O(1)$ workspace. 
As mentioned, finding vertex $q$ on $bd(Q)$ takes $O(|Q|)$ time using $O(1)$ workspace, and finding $p'$ on $bd(P')$ takes $O(|P'|^2)$ time using $O(1)$ workspace. 
By applying the constant workspace algorithm in Section~\ref{sect:simpinsimp}, computing both the shortest paths, one from $p_1$ and $q_1$ and the other from $p_2$ to $q$, in $P''$ together takes $O(|P'|^2+|Q|^2)$ time.
Since $P''$ is not computed and saved in memory, instead, a vertex or a point on an edge of $P''$ is computed as and when needed by the shortest path finding algorithm, computing these shortest paths uses only $O(1)$ workspace.
Further, since every vertex on these two shortest paths is output as and when it is computed, no additional workspace is required to store any of these vertices.
Computing the section of $bd(CH(P))$ from $p_2$ to $p_1$ with Jarvis march, by traversing $bd(P)$ in clockwise direction starting at $p_2$, takes $O(|P|^2)$ time.
\end{proof}

The following theorem is due to Lemma~\ref{lem:caseianal} and Lemma~\ref{lem:caseiianal}.

\begin{theorem}
Given two simple polygons $P$ and $Q$, with $Q$ located alongside $P$, the algorithm described above correctly computes $RH(P|Q)$ in $O(|P|^2+|Q|^2)$ time using $O(1)$ workspace.
\end{theorem}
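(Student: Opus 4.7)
The plan is to obtain the theorem as a direct consequence of Lemma~\ref{lem:caseianal} and Lemma~\ref{lem:caseiianal}, once I verify that the overall algorithm indeed dispatches to exactly one of the two subroutines and that the initial case-detection phase does not blow up the stated complexity. So the proof reduces to a case analysis guided by the number of bridges on $bd(CH(P \cup Q))$.

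First, I would invoke the structural fact, used implicitly in Toussaint~\cite{journals/dcg/1989Toussaint} and recalled above, that when $Q$ is placed alongside $P$ the number of bridges on $bd(CH(P\cup Q))$ is either $0$ or $2$, the two alternatives being exactly the two cases analyzed in the preceding lemmas. The algorithm computes $bd(CH(P \cup Q))$ by a single Jarvis march over the vertices of $P \cup Q$, inspecting each hull edge as it is produced to decide whether its endpoints lie one in $P$ and one in $Q$. This phase uses only $O(1)$ workspace (the current Jarvis state plus a bridge counter and storage for up to two bridges) and runs in $O((|P|+|Q|)^2) = O(|P|^2+|Q|^2)$ time.

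Second, once the case is identified, I would simply invoke the corresponding lemma. If two bridges are found, Lemma~\ref{lem:caseianal} applies: the stored bridges $p_1q_1$ and $p_2q_2$ together with the constant workspace shortest-path procedure from Section~\ref{sect:simpinsimp} produce $SP_{P''}(p_1,p_2)$, and Jarvis march on $P$ produces $S(p_1,p_2)$, so the boundary of $RH(P|Q)$ is streamed in $O(|P|^2+|Q|^2)$ time using $O(1)$ workspace. If no bridge is found, Lemma~\ref{lem:caseiianal} applies: pocket detection, the auxiliary vertices $q$, $p'$, $p_1'$, $q_1$, the construction (but not explicit storage) of $P''$, and the two shortest paths, together with the Jarvis traversal of the section of $bd(CH(P))$ from $p_2$ back to $p_1$, again stream the boundary of $RH(P|Q)$ in $O(|P|^2+|Q|^2)$ time using $O(1)$ workspace.

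The mildly delicate point, which I expect to be the only real obstacle, is bookkeeping: I must check that the case-detection phase, the pocket-identification loop, the subsequent shortest-path computations, and the final hull traversal are executed strictly sequentially with only $O(1)$ words of state carried between phases (at most the endpoints of the relevant bridges or lid, the vertex $q$, and the auxiliary $p'$), so that no hidden storage of partial output accumulates. Assuming this streaming discipline, correctness follows from the two lemmas and the classification dichotomy, and summing the $O(|P|^2+|Q|^2)$ contributions of the two sequential phases yields the claimed time bound with $O(1)$ workspace, completing the proof.
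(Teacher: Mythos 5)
Your proposal matches the paper's argument: the paper derives the theorem directly from Lemma~\ref{lem:caseianal} and Lemma~\ref{lem:caseiianal}, exactly as you do, with the bridge-counting dichotomy from Toussaint serving as the case split (and note that each lemma's statement already folds the case-detection cost into its bound). Your additional bookkeeping about sequential streaming is a harmless elaboration of the same route.
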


\section{Relative hull of a set $S$ of points located in a simple polygon} 
\label{sect:pointsinsimp}

In this section, we seek to find the relative hull $RH(S|P)$ of a set $S$ of points located in the simple polygon $P$.
Here, $RH(S|P)$ is a weakly simple polygon with a perimeter of minimum length that lies in $P$ and contains all the points in $S$.
For $S'$ being the union of the set comprising all the points in $S$ and the set comprising the vertices of $P$, for convenience, we assume no two points in $S'$ have the same $x$-coordinates or $y$-coordinates, and no three points in $S'$ are collinear.

For this problem, Toussaint~\cite{journals/sigproc/1986toussaint} proposed an algorithm, that takes $O((|P| + |S|)\lg{(|P|+|S|)})$ time in the worst-case while using $O(|P|+|S|)$ workspace.
Their algorithm computes a triangulation $T(P)$ of $P$ and uses the Eulerian tour of the dual graph $D(P)$ of $T(P)$ to visit triangles of $T(P)$. 
Whenever a triangle is visited in that tour, the convex hull of points of $S$ located in that triangle is computed.
For every two hulls $h', h''$ that occur next to each other in the nodes corresponding to triangles visited by the Eulerian tour of $D(P)$, a geodesic shortest path is computed between two special points of $S$, one is a vertex of $h'$ and the other one is a vertex of $h''$.
(We precisely define these points in describing our algorithm for this problem.)
All such hulls and the geodesic shortest paths between special vertices of those hulls together define the boundary of a weakly simple polygon $P'$.
Since $P'$ contains $S$ and is a subset of $RH(S|P)$, $RH(P'|P)$ is equal to $RH(S|P)$.
Thus their algorithm outputs the weakly simple polygon $RH(P'|P)$.

While following the outline of the algorithm given in \cite{journals/sigproc/1986toussaint} for this problem, here we devise a constant workspace algorithm using algorithmic ideas from Asano~\etal~\cite{journals/jocg/2011AsanoMRW} and Asano~\etal~\cite{journals/jgaa/2011AsanoMW}.
These include on-the-fly computation of the constrained Delaunay triangulation of an input simple polygon $P$, denoted by $CDT(P)$, an algorithm to visit the triangles of $CDT(P)$ using the Eulerian walk of the dual graph corresponding to $CDT(P)$, and computing the geodesic shortest path between two chosen points of $S$ in every sleeve.

First, we define $CDT(P)$ from Asano~\etal~\cite{journals/jgaa/2011AsanoMW}.
For any edge $pq$ of $CDT(P)$ and a vertex $r$ of $P$ distinct from $p$ and $q$, the triangle $pqr$ belongs to $CDT(P)$ if and only if
(i) $r$ is visible from both $p$ and $q$, and 
(ii) the relative interior of circle defined by $p, q$, and $r$ does not contain any vertex of $P$ that is visible from $r$.
It is well-known that $CDT(P)$ is unique for any simple polygon $P$, and every edge of $P$ is an edge of $CDT(P)$.
The dual of $CDT(P)$ is the graph $D(P)$ obtained by introducing a node for each triangle of $CDT(P)$ and introducing an edge between every two nodes whose corresponding triangles are adjacent along an edge of $CDT(P)$.
It is known that $D(P)$ is a tree since $P$ is a simple polygon.
Hence, we call $D(P)$ the dual-tree of $CDT(P)$.

Like in \cite{journals/jgaa/2011AsanoMW}, our algorithm computes triangles of $CDT(P)$ on-the-fly, that is, a triangle of $CDT(P)$ is computed as and when it is needed.
Since $D(P)$ is not stored, as part of computing the Eulerian tour $\tau$ of $D(P)$, our algorithm computes the triangles of $CDT(P)$ as required for computing $\tau$. 
That is, given an edge $e$ of $CDT(P)$, the triangle(s) to which $e$ is incident in $CDT(P)$ are computed.
The triangles in $CDT(P)$ are also needed to compute the geodesic shortest path between any two points of $S$ in $P$.
For this as well, any triangle in $CDT(P)$ is computed on-the-fly, as required by the geodesic shortest path finding algorithm.
Since $CDT(P)$ is unique for $P$, re-computing triangles of interest in $CDT(P)$ always yields the identical set of triangles.
Also, since the dual-tree $D(P)$ associated with $CDT(P)$ is unique, re-computing edges incident to any node of $D(P)$ yields the same set of edges.

\begin{figure}
\centering
\includegraphics[width=0.35\textwidth]{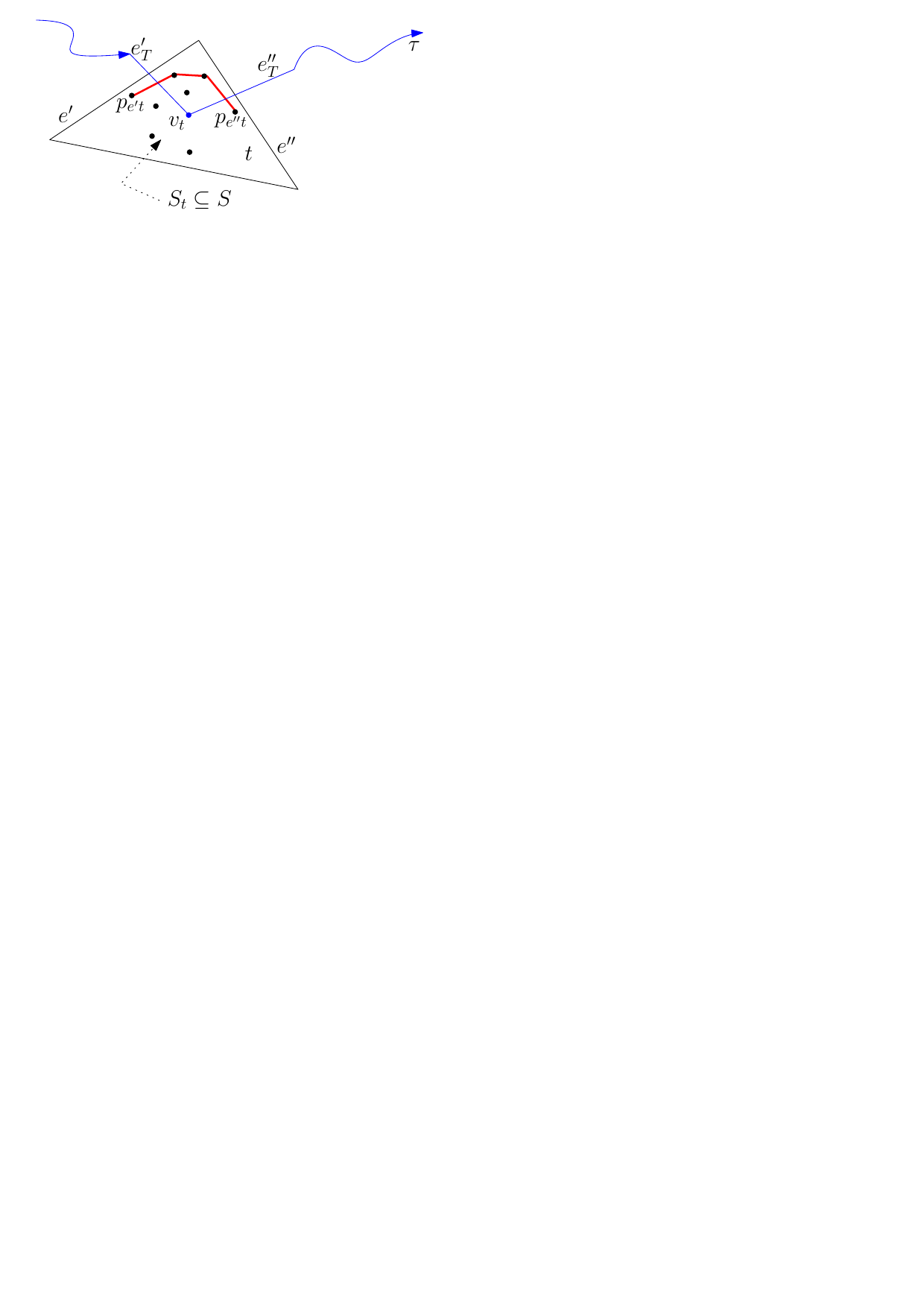}
\caption{The section of boundary of convex hull of the set $S_t \subseteq S$ of points located in triangle $t$ between the connecting point $p_{e't}$ of $e'$ in $t$ and the connecting point $p_{e''t}$ of $e''$ in $t$ is illustrated in red.
The Eulerian tour $\tau$ of dual-tree $D(P)$ of $CDT(P)$ is illustrated in blue. 
Here, $v_t$ (in blue) is the dual node corresponding to triangle $t$. 
Also, $e'$ and $e''$ of $CDT(P)$ respectively correspond to edges $e_T'$ and $e_T''$ of dual-tree $D(P)$.
}
\label{fig:partialhull}
\end{figure}

For every triangle $t \in CDT(P)$ that has points from $S$, for every edge $e$ of $t$, among all the points belonging to $S$ that are located in $t$, following \cite{journals/sigproc/1986toussaint}, the one closest to $e$ is called the {\it connecting point of $e$ in $t$}, denoted by $p_{et}$.
We say a triangle $t$ is visited whenever the Eulerian tour $\tau$ of $D(P)$ visits the dual node $v_t$ of $D(P)$ corresponding to $t$.
Since the degree of any node in $D(P)$ is at most three, in computing $\tau$, any triangle $t$ of $CDT(P)$ is visited at most a constant number of times. 
Suppose the tour $\tau$ encounters a node $v_t \in D(P)$ by traversing an edge $e_T'$ of $D(P)$, and leaves $v_t$ by traversing an edge $e_T''$ of $D(P)$, where $e_T'$ and $e_T''$ are not necessarily distinct edges of $D(P)$ that are incident to $v_t$.
Refer to Fig.~\ref{fig:partialhull}.
Let $S_t \subseteq S$ be the set of points in $t$.
If $|S_t| > 0$ and $e_T' \ne e_T''$, then our algorithm computes the section of $bd(CH(S_t))$ from $p_{e't}$ to $p_{e''t}$ using Jarvis march, where $e'$ is the dual edge of $CDT(P)$ corresponding to $e_T'$ and $e''$ is the dual edge of $CDT(P)$ corresponding to $e_T''$.
As part of computing these hull edges, to determine points in $S$ belonging to $t$, we traverse the input array comprising points in $S$.

\begin{figure}
\centering
\includegraphics[width=0.5\textwidth]{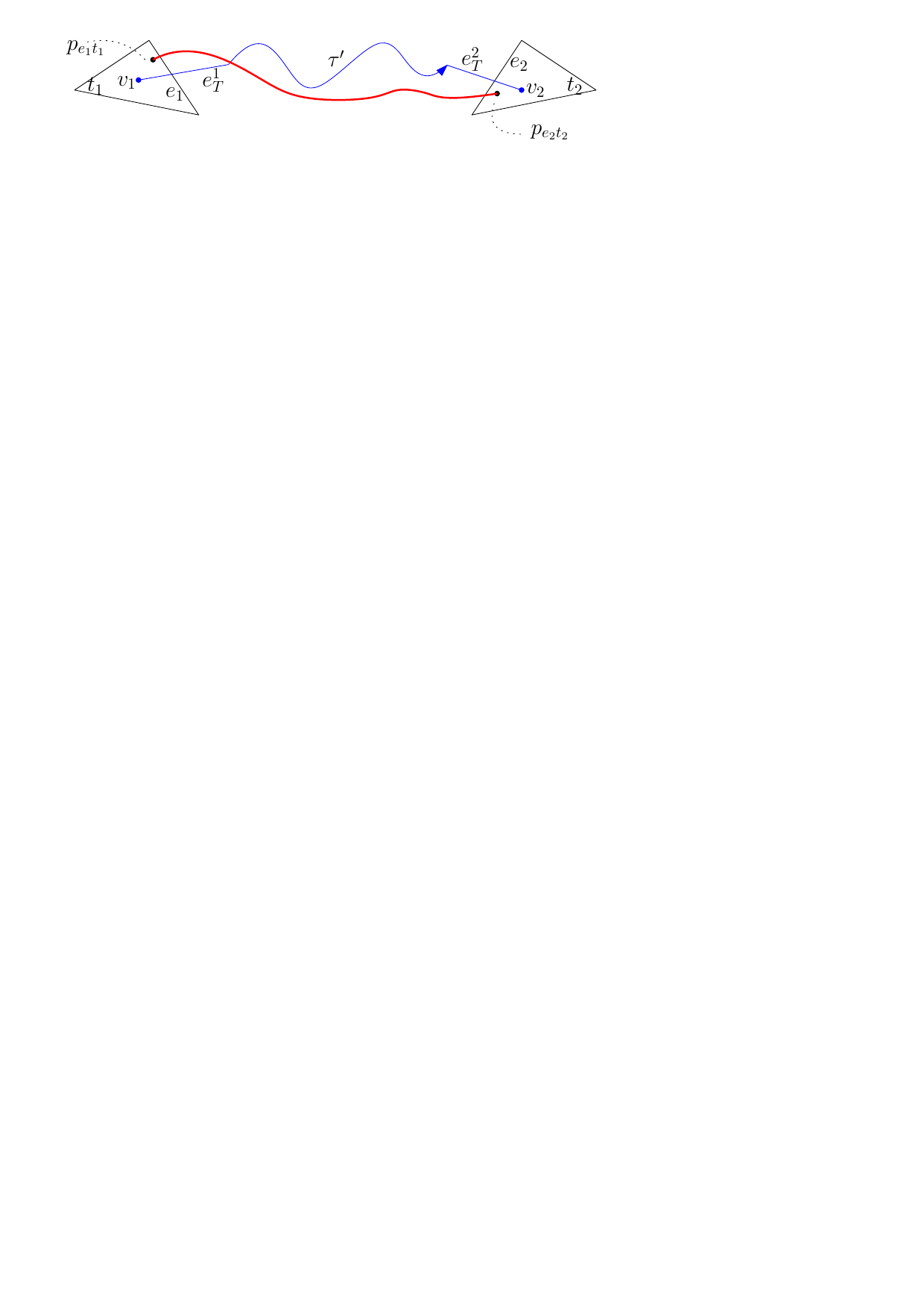}
\caption{Illustrates the geodesic shortest path (in red) from connecting point $p_{e_1t_1}$ to connecting point $p_{e_2t_2}$. 
The subtour $\tau'$ between triangles $t_1$ and $t_2$ in dual-tree $D(P)$ is in blue.
Here, $v_1$ and $v_2$ (in blue) are dual nodes, respectively corresponding to triangles $t_1$ and $t_2$.
Also, triangulation edges $e_1$ and $e_2$ respectively correspond to edges $e_T^1$ and $e_T^2$ in dual-tree $D(P)$.
}
\label{fig:spsleeve}
\end{figure}

Let $t_1$ and $t_2$ be any two triangles such that both $t_1$ and $t_2$ have points from $S$, triangle $t_2$ is visited after visiting triangle $t_1$, and no triangle with points from $S$ is visited between these visits of $t_1$ and $t_2$. 
Let $\tau'$ be the shortest contiguous section of the Eulerian tour $\tau$ corresponding to this traversal of $D(P)$.
Let $v_1$ and $v_2$ be the nodes of $D(P)$ corresponding to $t_1$ and $t_2$, respectively.
In $\tau'$, let $e_T^1$ be the edge incident to $v_1$ and $e_T^2$ be the edge incident to $v_2$.
Let $e_1$ and $e_2$ be the dual edges of $CDT(P)$ corresponding to $e_T^1$ and $e_T^2$, respectively.
Refer to Fig.~\ref{fig:spsleeve}.
Then, we find the geodesic shortest path between connecting points $p_{e_1t_1}$ and $p_{e_2t_2}$ in the sleeve corresponding to all the triangles of $CDT(P)$ that are visited in $\tau'$.

The main overhead in visiting any triangle $t$ of $CDT(P)$ is computing $t$ itself, that is, just before visiting $t$.
Here, we describe our algorithm to compute triangles of $CDT(P)$, as and when each of them is needed.
Let $p', q'$ be the vertices of $P$ such that $p'$ and $q'$ are visible to each other and the line segment $p'q'$ belongs to the untriangulated portion $U$ of $P$. 
Then, the vertex $r'$ belonging to $U$ is determined so that the triangle $p'q'r'$ belongs to $P$ and the relative interior of the circumcircle $C'$ of $p'q'r'$ contains no vertex of $P$ that is visible to $r'$.
Specifically, for every vertex $r$ in $U$, we first check whether both the line segments $p'r$ and $q'r$ belong to $P$.
This is accomplished by checking whether $p'r$ (resp. $q'r$) intersects any edge of $P$ in $O(|P|^2)$ time using $O(1)$ workspace.
If both of these line segments belong to $P$, then for every vertex $s'$ of $P$ in $C'$, we determine whether $s'$ is visible to $r$ by checking whether the line segment $rs'$ intersects any edge of $P$.
If no such $s'$ exists, then the triangle $p'q'r$ belongs to $CDT(P)$.
As mentioned earlier, there always exists a vertex $r'$ corresponding to $p'q'$ such that $p'q'r'$ is a triangle of $CDT(P)$, and hence our algorithm is guaranteed to find $r'$. 
This naive algorithm to find $r'$ takes $O(|P|^2)$ time using $O(1)$ workspace.
Since every edge of $P$ is an edge of some triangle in $CDT(P)$, we invoke the above algorithm with an arbitrary edge $p_1q_1$ of $P$ and find a vertex $r_1$ of $P$ such that the triangle $p_1q_1r_1$ belongs to $CDT(P)$.

Any triangle of $CDT(P)$ required to compute the Eulerian tour $\tau$ of $D(P)$ is determined as and when it is needed in an online fashion.
Since each triangle is visited at most $O(1)$ times, computing all the triangles in the Eulerian tour takes $O(|P|^3)$ time using $O(1)$ workspace.
In addition, to find the geodesic shortest path between two connecting points, we compute the sleeve.
This also requires computing the triangles belonging to that sleeve on-the-fly.
Since all the triangles in each such sleeve are computed only once and since the union of triangles in all such sleeves is a subset of $CDT(P)$, the time to compute all the triangles of $CDT(P)$ that belong to all the sleeves together is $O(|P|^3)$.
Excluding the time complexity to compute triangles of $CDT(P)$, due to the algorithm given in \cite{journals/jgaa/2011AsanoMW}, computing the Eulerian tour takes $O(|P|)$ time using constant workspace.
This leaves us to analyze the time complexity to compute a geodesic shortest path in a sleeve.
Applying the geodesic shortest path finding algorithm from \cite{journals/jocg/2011AsanoMRW}, excluding the time to compute the triangles of $CDT(P)$, the geodesic shortest path computation in all the sleeves of $P$ together takes $O(|P|^2)$ time using $O(1)$ space.

Next, we provide details of computing $RH(P'|P)$, which is essentially $RH(S|P)$.
We find the point $s$ in $S$ that has the largest $y$-coordinate among all the points in $S$.
It is immediate $s$ is a vertex of both $P'$ and $RH(S|P)$.
Since $s$ is located in $P$ and since $bd(P)$ is a cycle, there is a vertex $p$ of $P$ whose $y$-coordinate is larger than the $y$-coordinate of $s$ such that the relative interior of line segment $sp$ intersects neither $bd(P)$ nor $bd(P')$.
Let $s_1$ and $s_2$ be two points infinitesimally close to $s$ on $bd(P')$ such that $s_1, s, s_2$ occur in that order when $bd(P')$ is traversed in clockwise direction.
Also, let $p_1$ and $p_2$ be two points infinitesimally close to $p$ on $bd(P)$ such that $p_1, p, p_2$ occur in that order when $bd(P)$ is traversed in clockwise direction.
Among line segments $p_1s_1$ and $p_2s_2$, without loss of generality, suppose $p_1s_1$ belongs to $P \backslash rint(P')$.
Excluding the region between line segments $s_1p_1$ and $sp$ from $P \backslash P'$ results in a simple polygon, say $P''$.
Noting that $RH(P'|P)$ is the shortest path from $s_1$ to $s$ in $P''$, we compute the geodesic shortest path from $s_1$ to $s$ in $P''$ using the constant workspace algorithm presented in Section~\ref{sect:simpinsimp}, which is a modification to the algorithm given in \cite{journals/jocg/2011AsanoMRW}.
Though the edges of $P$ and the line segments $s_1p_1, sp$ are available beforehand, the algorithm in Section~\ref{sect:simpinsimp} computes the edges of $P'$ on-the-fly as they are needed.
This does not affect the correctness since the edges of $P'$ are computed in our algorithm in the same order as they are required by the geodesic shortest path finding algorithm. 
Every vertex of the geodesic shortest path from $s_1$ to $s$ in $P''$ is sent to the output stream as and when it is computed.
Below, we provide a proof of correctness and the analysis of our algorithm.

\begin{theorem}
Given a set $S$ of points located in a simple polygon $P$, our algorithm computes $RH(S | Q)$ in $O(|P|^3+|S|^2)$ time using $O(1)$ workspace. 
\end{theorem}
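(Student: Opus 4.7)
The plan is to split the argument into correctness, time, and workspace, leveraging the fact that $CDT(P)$ and its dual-tree $D(P)$ are both unique, so any on-the-fly recomputation returns the same object. Correctness of the overall scheme then reduces to correctness of its pieces in isolation: the naive circumcircle-and-visibility search that returns the triangle of $CDT(P)$ incident to a prescribed edge; the Eulerian-walk procedure of Asano~\etal~\cite{journals/jgaa/2011AsanoMW}; the per-triangle Jarvis march that emits the arc of $CH(S_t)$ between the two connecting points $p_{e't}$ and $p_{e''t}$; the sleeve geodesic shortest-path algorithm of \cite{journals/jocg/2011AsanoMRW}; and the closing call to the algorithm of Section~\ref{sect:simpinsimp} on $P''$, whose correctness is Theorem~\ref{proof:simpinsimp}. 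It remains to argue that the polygon $P'$ tacitly assembled by these steps satisfies $RH(P'|P)=RH(S|P)$, which follows from the corresponding argument of Toussaint~\cite{journals/sigproc/1986toussaint} since the $P'$ our algorithm implicitly builds is pointwise identical to the one his algorithm would store.

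For the time bound I would charge work to four sources. First, a single triangle of $CDT(P)$ is produced in $O(|P|^2)$ time by the naive visibility-and-circumcircle procedure described in the text; since $D(P)$ has $O(|P|)$ nodes visited a constant number of times each by the Eulerian tour and by the union of all sleeves, triangle construction totals $O(|P|^3)$. Second, for each triangle $t$ with $|S_t|$ points inside it, an output-sensitive Jarvis march between the two connecting points costs $O(|S_t|\cdot h_t)$ per visit, where $h_t\le|S_t|$ is the number of hull vertices emitted; since $\sum_t|S_t|\le|S|$ and each triangle is visited $O(1)$ times, this sums to $\sum_t|S_t|^2=O(|S|^2)$. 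Third, the sleeve geodesic shortest paths together contribute $O(|P|^2)$ (excluding triangle construction) by the same charging used in the proof of Theorem~\ref{proof:simpinsimp}. Fourth, the concluding call to the Section~\ref{sect:simpinsimp} algorithm on $P''$ costs $O(|P|^2+|S|^2)$, since $|P'|=O(|S|)$. The sum is $O(|P|^3+|S|^2)$.

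The workspace bound I would establish by enumerating the invoked subroutines: the triangle finder, the Eulerian-walk procedure, Jarvis march on $S_t$, the sleeve shortest-path routine, and the Section~\ref{sect:simpinsimp} algorithm each use $O(1)$ workspace, and every vertex declared to lie on $bd(RH(S|P))$ is streamed out immediately. Persistent state across the outer loop therefore consists only of a constant number of pointers, namely the current dual edge of the Eulerian walk, the apex of the current triangle, the two connecting points from the last non-empty triangle encountered, and the endpoints of a seed edge of $P$. The main obstacle I anticipate is the partial-hull amortization: verifying that repeated visits to the same triangle during $\tau$ contribute only $O(|S_t|^2)$ per triangle, that hull arcs are written to the output only when the entry and exit dual-edges differ so that edges of $bd(P')$ are not double-counted, and that the subsequent Section~\ref{sect:simpinsimp} call can re-request any edge of $P'$ purely from the currently stored connecting points without exceeding the stated time budget. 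Once this bookkeeping is pinned down, the remaining estimates are routine.
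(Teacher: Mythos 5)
Your proposal follows essentially the same route as the paper's proof: the same four cost sources (on-the-fly triangle construction at $O(|P|^2)$ each for $O(|P|^3)$ total, per-triangle partial hulls, sleeve geodesics at $O(|P|^2)$, and the closing call on $P''$), and the same correctness reduction via uniqueness of $CDT(P)$ and Toussaint's argument that $RH(P'|P)=RH(S|P)$. Two intermediate claims are slightly off in the constant-workspace model, though neither changes the final bounds: Jarvis march on $S_t$ cannot cost $O(|S_t|\cdot h_t)$, since each hull step must rescan the entire read-only array $S$ to identify the points lying in $t$, giving $O(|S|\cdot h_t)$ per triangle, which still sums to $O(|S|^2)$ because $\sum_t h_t = O(|S|)$ (this is exactly the accounting the paper uses); and $|P'|$ is $O(|P|+|S|)$ rather than $O(|S|)$, since the geodesic paths between connecting points contribute vertices of $P$, which again leaves the $O(|P|^2+|S|^2)$ cost of the final call intact.
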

\begin{proof}
Since the Eulerian tour $\tau$ of $D(P)$ visits every node of $D(P)$ by computing the corresponding triangles of $CDT(P)$ on-the-fly, every triangle of $CDT(P)$ with points from $S$ is considered.  
Since the maximum degree of any node of $D(P)$ is three, in computing $\tau$, any triangle in $CDT(P)$ is visited $O(1)$ times.
For any triangle $t$ that has a non-empty set $S_t \subseteq S$ of points, if $t$ is visited by entering via $e'$ and exited via edge $e''$, then the section of $bd(CH(S_t))$ between connecting points $p_{e't}$ and $p_{e''t}$ is correctly computed using Jarvis march. 
Considering the number of times $t$ is visited and the edges through which $t$ is entered and exited in each of those times, $bd(CH(S_t))$ is guaranteed to be computed correctly.
Since in the Eulerian tour $\tau$, every pair of connecting points of three edges of triangle $t$ is considered one after the other to compute the section of hull between them, every edge in $bd(CH(S_t))$ is computed exactly once. 
For any two triangles $t', t''$, there is a unique path between the nodes corresponding to them in the dual-tree $D(P)$. 
And, for every two successive non-empty triangles $t_1, t_2$ along $\tau$, algorithm in \cite{journals/jocg/2011AsanoMRW} correctly computes the geodesic shortest path joining the connecting points $p_{e_1t_1}$ and $p_{e_2t_2}$.
These connecting points respectively correspond to the edge through which $\tau$ exited $t_1$ and the edge through which $\tau$ entered $t_2$.
This shortest path computation is again accomplished by computing the triangles of $CDT(P)$ on-the-fly along the sleeve between $t_1$ and $t_2$.
These triangles are computed correctly since we use the naive algorithm to compute triangles of $CDT(P)$.
From these, the correctness of $P'$ is immediate: $P'$ is a weakly simple polygon that contains all the points in $S$, it is contained in $P$, and $P' \subseteq RH(S|P)$.
As the edges of $P'$ are computed, from these edges, the weakly simple polygon $RH(P'|P)$ in $P \backslash P'$ is computed.
The latter is correct since this problem is reduced to finding a geodesic shortest path in simple polygon $P''$.
As a whole, considering the correctness of the algorithm for this problem in \cite{journals/sigproc/1986toussaint}, the computation of $RH(S|P)$ is correct.

Given a triangle $t$ and an edge $e$ of $t$, computing a triangle in $CDT(P)$ adjacent to $t$ that has $e$ on its boundary takes $O(|P|^2)$ time.
Since the triangulation is not stored and computed on-the-fly, given there are $O(|P|)$ triangles in $CDT(P)$, computing the triangles of $CDT(P)$, while computing an Eulerian tour of $D(P)$ takes $O(|P|^3)$ time using $O(1)$ workspace.
Significantly, since the degree of any triangle in $CDT(P)$ is at most three, each triangle in $CDT(P)$ is visited at most a constant number of times in visiting triangles corresponding to nodes of $D(P)$, as part of computing $\tau$.
Hence, the time involved to compute triangles of $CDT(P)$ for computing $\tau$ is as stated. 
Due to the algorithm for computing the Eulerian tour in \cite{journals/jgaa/2011AsanoMW}, the Eulerian tour $\tau$, excluding the time involved in computing triangles of $CDT(P)$, takes $O(|P|)$ time using $O(1)$ workspace.
For any triangle $t_i$ that has the set $S_i \subseteq S$ of points, for every invocation of Jarvis march on $S_i$, we need to find points that belong to $t_i$ by traversing the list of points in $S$.
Hence, Jarvis march to compute any one edge of the hull of points in $S_i$ takes $O(|S|)$ time.
And, excluding the time to compute $\tau$, the time for computing all such edges of hulls together takes $O(|S|^2)$ time.
Using the algorithm to compute geodesic shortest paths from Section~\ref{sect:simpinsimp}, computing the geodesic shortest path between two connecting points takes $O(k^2)$ time if $k$ is the number of triangles belonging to that sleeve.
Since the sum of the number of triangles in all sleeves is $O(|P|)$, the time to compute all geodesic shortest paths in all the sleeves together is $O(|P|^2)$.
As part of computing geodesic shortest paths between connecting points, every sleeve of interest is computed exactly once, and hence each triangle in each sleeve is also computed only once for this computation.
Hence, the computation of triangles of $CDT(P)$ for computing all shortest paths required in the algorithm together takes $O(|P|^3)$ time using $O(1)$ workspace.
With the geodesic shortest path finding algorithm in Section~\ref{sect:simpinsimp}, excluding the time to compute triangles of $CDT(P)$, computing the geodesic shortest path $RH(P'|P)$ from $s_1$ to $s$ in simple polygon $P''$ takes $O(|P|^2+|S|^2)$ time using $O(1)$ workspace.
\end{proof}

\section{Conclusions}
\label{sect:conclu}

This paper devised constant workspace algorithms for three problems:
(i) relative hull of a simple polygon $P$ when $P$ is located in another simple polygon $Q$, 
(ii) relative hull of a simple polygon $P$ when another simple polygon $Q$ is located alongside $P$, and 
(iii) the relative hull of a set $S$ of points when these points are located in a simple polygon $P$.
For the first two problems, we proposed $O(|P|^2+|Q|^2)$ time solutions using constant workspace. 
The algorithm presented for the last problem has $O(|P|^3+|S|^2)$ time complexity using constant workspace.
We feel that, with further research, the cubic dependence on $|P|$ can possibly be reduced.
The algorithms presented here are the first algorithms to provide space-efficient solutions for computing relative hulls.
Providing time-space trade-offs for all these problems would be interesting.
As an extension to the last problem, given a set $S$ of points and a set $L$ of pairwise disjoint line segments, it would be interesting to devise an efficient constant workspace algorithm for computing a weakly simple polygon $R$ with a perimeter of minimum length such that $R$ contains all the points in $S$ and $R \cap rint(\ell) = \emptyset$ for every $\ell \in L$.

\subsection*{Acknowledgement}

This research of R. Inkulu is supported in part by the National Board for Higher Mathematics (NBHM) grant 2011/33/2023NBHM-R\&D-II/16198.

\bibliographystyle{plain}

\end{document}